\documentclass[11pt, letter]{article}

\usepackage[T1]{fontenc}    
\usepackage{hyperref}       
\usepackage{url}            
\usepackage{booktabs}       
\usepackage{amsfonts}       
\usepackage{nicefrac}       
\usepackage{microtype}      

\usepackage[compact]{titlesec}
\usepackage{enumitem}
\usepackage{wrapfig}

\usepackage{amsmath,amstext,amsopn,amsthm}
\usepackage{mathtools}
\usepackage{cleveref}
\usepackage{graphicx}
\usepackage{color}

\usepackage{bbm}
\usepackage{algorithm}

\usepackage[noend]{algpseudocode}
\newtheorem{theorem}{Theorem}
\newtheorem{corollary}{Corollary}
\newtheorem{lemma}{Lemma}

\newtheorem{fact}{Fact}

\newtheorem{observation}{Observation}

\makeatletter
\def\BState{\State\hskip-\ALG@thistlm}
\makeatother

\def\RR{{\mathbb R}}

\def\HH{{\mathbb H}}

\def\ZZ{{\mathbb Z}}

\def\P{{\mathcal P}}
\def\EE{{\mathbb E}}

\def\K{{\mathcal K}}

\def\M{{\mathcal M}}
\def\A{{\mathcal A}}
\def\I{{\mathcal I}}
\def\D{{\mathcal D}}
\def\Qe{{\mathcal Q}}

\DeclareMathOperator{\argmax}{argmax}

\DeclareMathOperator{\OPT}{OPT}
\DeclareMathOperator{\alg}{ALG}

\newcommand{\one}{\mathbf{1}}
\DeclareMathOperator{\ident}{\bf I}
\DeclareMathOperator{\SigmaB}{\bf \Sigma}
\DeclareMathOperator{\muB}{\bf \mu}
\DeclareMathOperator{\ub}{UB}
\DeclareMathOperator{\lb}{LB}


\usepackage{mathtools}
\renewcommand{\algorithmicrequire}{\textbf{Input:}}
\renewcommand{\algorithmicensure}{\textbf{Output:}}

\usepackage{sidecap}

\usepackage[suppress]{color-edits}
\addauthor{nh}{magenta}
\addauthor{at}{blue}

\usepackage[margin=1.2in]{geometry}

\usepackage{thmtools}
\usepackage{thm-restate}

\usepackage{hyperref} 
\usepackage[round]{natbib}
\bibliographystyle{plainnat}


\title{Structured Robust Submodular Maximization:  \\ Offline and Online
  Algorithms}
\usepackage{times}



\author{Alfredo Torrico\thanks{Polytechnique Montr\'eal, Canada. Email: alfredo.torrico-palacios@polytml.ca} \quad Mohit Singh\thanks{Georgia Institute of Technology, Atlanta. Email: mohit.singh@isye.gatech.edu}  \quad Sebastian Pokutta\thanks{ZIB, TU Berlin. Email: pokutta@zib.de} \\ Nika Haghtalab\thanks{Cornell University, Email: nika@cs.cornell.edu}  \quad Joseph (Seffi) Naor\thanks{Technion, Haifa, Israel. Email: naor@cs.technion.ac.il} \quad Nima Anari\thanks{Stanford University, Stanford. Email: anari@berkeley.edu} }

\date{}

\begin{document}

\maketitle

\begin{abstract}
Constrained submodular function maximization has been used in subset selection problems such as selection of most informative sensor locations. While these models have been quite popular, the solutions Constrained submodular function maximization has been used in subset selection problems such as selection of most informative sensor locations. While these models have been quite popular, the solutions obtained via this approach are unstable to perturbations in data defining the submodular functions. Robust submodular maximization has been proposed as a richer model that aims to overcome this discrepancy as well as increase the modeling scope of submodular optimization.

In this work, we consider robust submodular maximization with structured combinatorial constraints and give efficient algorithms with provable guarantees. Our approach is applicable to constraints defined by single or multiple matroids, knapsack as well as distributionally robust criteria. We consider both the offline setting where the data defining the problem is known in advance
as well as the online setting where the input data is revealed over time. For the offline setting, we give a general (nearly) optimal bi-criteria approximation algorithm that relies on new extensions of classical algorithms for submodular maximization. For the online version of the problem, we give an algorithm that returns a bi-criteria solution with sub-linear regret.
 
\end{abstract}


\section{Introduction}
Constrained submodular function maximization has seen significant progress in recent years in the design and analysis of new algorithms with guarantees~\citep{calinescu2011maximizing,ene_etal16, buchbinder_etal16, sviridenko_04}, as well as numerous applications - especially in constrained subset selection problems~\citep{powers_etal16, lin_etal09, krause_etal05, krause_etal09, krause2008robust, krause_etal08a} and more broadly machine learning. A typical example is the problem of picking a subset of candidate sensor locations for spatial monitoring of certain phenomena such as temperature, ph values, humidity, etc. \citep{krause2008robust}.  Here the goal is typically to find sensor locations that achieve the most coverage or give the most information about the observed phenomena. Submodularity naturally captures the decreasing marginal gain in the coverage, or the information acquired about relevant phenomena by using more sensors \citep{das2008algorithms}. While submodular optimization offers an attractive model for such scenarios, there are a few key shortcomings, which motivated \emph{robust submodular optimization} in the cardinality case \citep{krause2008robust}, so as to optimize against several functions \emph{simultaneously}:

\begin{enumerate}[topsep=0pt,itemsep=0ex,partopsep=1ex,parsep=1ex]
\item The sensors are typically used to measure various parameters at the same time. Observations for these parameters need to be modeled via different submodular functions. 

\item Many of the phenomena being observed are non-stationary and highly variable in certain locations. To obtain a good solution, a common approach is to use different submodular functions to model different spatial regions. 

\item The submodular functions are typically defined using data obtained from observations, and imprecise information can lead to unstable optimization problems. Thus, there is a desire to compute \emph{solutions that are robust} to perturbations of the submodular functions.
\end{enumerate}

Given the computational complexity of optimizing several functions simultaneously, \citet{krause2008robust} motivated a bicriteria approach. In simple words, in order to obtain provable guarantees, one needs to trade off the quality of the solution measured by its objective value with the ``size'' of the solution. In the case of a single cardinality constraint, \citet{krause2008robust} proposes a natural relaxation which consists in allowing more elements in the final set, i.e., violating the feasibility constraint. However, to obtain provable guarantees for more general \emph{combinatorial constraints} relaxing the size of feasible sets is not enough.

Our main contribution is the development of new algorithms with provable guarantees for robust submodular optimization under a large class of combinatorial constraints. These include partition constraints, where local cardinality constraints are placed on disjoint parts of the ground set.  More generally, we consider matroid and knapsack constraints.

We provide bi-criteria approximations that trade-off the approximation factor with the ``size'' of the solution, measured by the number \(\ell\) of feasible sets \(\{S_i\}_{i \in [\ell]}\) \atdelete{in the union \(S = \bigcup_{i \in [\ell]} S_i\) that} \atedit{whose union} constitutes the final solution \(S\). While this might be nonintuitive at first, it turns out that the union of feasible sets corresponds to an appropriate analog of the single cardinality constraint\atedit{.} \atdelete{case for} \atedit{Some} special cases of interest \atedit{are}:

\begin{enumerate}[topsep=0pt,itemsep=0ex,partopsep=1ex,parsep=1ex]
\item \emph{Partition constraints.} Given a partition of the candidate sensor locations, \atedit{the feasible sets correspond to subsets that satisfy a cardinality constraint on each part of the partition.} \atdelete{the partition constraints introduces a \emph{set} of cardinality constraints; one for each part of the partition.} The union of feasible sets here corresponds to relaxing the cardinality constraints \atedit{separately for each part}. This results in a stronger guarantee than relaxing the constraint globally as would be the case in the single cardinality constraint case.
\item \emph{Gammoid.} Given a directed graph and a subset of nodes $T$, the feasible sets correspond to subsets $S$ that can reach $T$ via disjoint paths in the graph. Gammoids appear in flow based models, for example in reliable routing. The union of feasible sets now corresponds to sets $S$ that can reach $T$ via paths such that each vertex appears in few paths. 
\end{enumerate}

We consider both offline and online versions of the problem, where the data is either known a-priori or is revealed over time, respectively. For the offline version of the problem, we provide a general procedure that iteratively utilizes any standard algorithm for submodular maximization to produce a solution which is a union of multiple feasible sets. The analysis relies on known insights about the performance of the classical greedy algorithm when is used for cardinality constraint. For the online case, we introduce new technical ingredients that might be broadly applicable in online robust optimization. Our work significantly expands on previous works on robust submodular optimization that focused on a single cardinality constraint~\citep{krause2008robust}. Moreover, our work substantially differ from its proceeding version \citep{torrico_etal19}, because: (1) we provide a general and flexible framework to design an offline bi-criteria algorithm; (2) we give significant implementation improvements and support our theoretical results via an exhaustive computational study on two real-world applications.

\subsection{Problem Formulation}\label{sec:problem_formulation}

\nhedit{As we describe below, we study offline and online variations of \emph{robust submodular maximization under structured combinatorial constraints}. While our results holds for more general constraints, we focus our attention first on \emph{matroid} constraints that generalize the partition as well as the gammoid structural constraints mentioned above. We discuss extensions to other class of constraints in Section~\ref{sec:extensions}.}

\nhedit{Consider a non-negative set function $f:2^V\to\RR_+$. We denote the marginal value for any subset $A\subseteq V$ and $e\in V$ by $f_A(e):=f(A+e)-f(A)$, where $A+e:= A\cup\{e\}$. Function $f$ is {\it submodular} if and only if it satisfies the {\it diminishing returns property}. Namely, for any $e\in V$ and $A\subseteq B\subseteq V\backslash\{e\}$, $f_A(e)\geq f_B(e)$.
We say that $f$ is \emph{monotone} if for any $A\subseteq B\subseteq V$, we have $f(A)\leq f(B)$.
Most of our results are concerned with optimization of monotone submodular functions.}

A natural class of constraints considered in submodular optimization are \emph{matroid} constraints.
For a ground set $V$ and a family of sets $\I\subseteq 2^V$, $\M=(V,\I)$ is a matroid if
(1) for all $A\subset B\subseteq V$, if $B\in \I$ then $A\in \I$ and
(2) for all $A, B\in \I$ with $|A| < |B|$, there is $e\in B\setminus A$ such that $A \cup \{e\} \in \I$.
Sets in such a family $\I$ are called \emph{independent} sets, or simply put, \emph{feasible} sets for the purpose of  optimization. Maximal independent sets are called \emph{basis}. Finally, the \emph{rank} of a matroid is the maximum size of an independent set in the matroid.

The classical problem of maximizing a single monotone submodular function $f:2^{V}\rightarrow \RR_+$ under a matroid constraint $\M=(V,\I)$ is formally stated as: $\max_{A\in \I} f(A)$. Throughout this paper, we say that a polynomial time algorithm $\A$ achieves a $(1-\beta)$ approximation factor for this problem, if it returns a feasible solution $S\in \I$ such that $f(S)\geq (1-\beta)\cdot \max_{A\in \I} f(A)$. When $\A$ is randomized, we say that $\A$ achieves a $(1-\beta)$-approximation in expectation if $\EE[f(S)]\geq (1-\beta)\cdot \max_{A\in \I} f(A)$. The well-known standard greedy algorithm \citep{nemhauser1978analysis} is an example of $\A$ with $\beta = 1/2$ \citep{fisher1978analysis}. We will denote by $\text{time}(\A)$ the running time of $\A$.

In this work, we consider the robust variation of the above submodular maximization problem. That is,  for a matroid $\M=(V,\I)$, and a given collection of $k$ monotone submodular functions $f_i:2^{V}\rightarrow \RR_+$ for  $i \in [k]$,  our goal is to efficiently select a set $S$ that maximizes $\min_{i\in[k]} f_i(S)$. We define a $(1-\epsilon)$-approximately optimal solution $S$ as
 \begin{equation}\label{eq:offline_def}
\min_{i\in [k]} f_i(S) \geq (1-\epsilon)\cdot\max_{A \in \I} \min_{i\in[k]} f_i(A).
 \end{equation}

We also consider  the online variation of the above optimization problem in presence of an adversary. In this setting, we are given a fixed matroid $\M=(V,\I)$. At each time step  $t\in[T]$, we choose a set $S^t$. An adversary then selects a collection of $k$ monotone submodular functions $\{f_i^t\}_{i\in[k]}:2^V\to[0,1]$.
We receive a reward of $\min_{i\in[k]}\EE[f_i^t(S^t)]$, where the expectation is taken over any randomness in choosing  $S^t$. We can then use the knowledge of the adversary's actions, i.e., oracle access to $\{f_i^t\}_{i\in[k]}$, in our future decisions. We consider non-adaptive adversaries whose choices $\{f_i^t\}_{i\in[k]}$ are independent of $S^\tau$ for $\tau < t$. In other words, an adversarial sequence of functions $\{f_i^1\}_{i\in[k]}, \dots, \{f_i^T\}_{i\in[k]}$ is chosen upfront without being revealed to the  optimization algorithm. Our goal is to design an algorithm that maximizes the total payoff $\sum_{t\in[T]}\min_{i\in[k]}\EE[f_i^t(S^t)]$. Thus, we would like to obtain a cumulative reward that competes with that of the fixed set $S\in \I$  we should have played had we known all the functions $f_i^t$ in advance, i.e., compete with $\max_{S\in \I}\sum_{t\in[T]}\min_{i\in[k]}f^t_i(S).$ As in the offline optimization problem, we also consider competing with $(1-\epsilon)$ fraction of the above benchmark. In this case, $\text{\bf Regret}_{1-\epsilon}(T)$ denotes how far we are from this goal. That is,
\begin{equation}\label{eq:regret}
\text{\bf Regret}_{1-\epsilon}(T)=(1-\epsilon)\cdot\max_{S\in\I}\sum_{t\in[T]} \min_{i\in[k]}f^t_i(S) - \sum_{t\in[T]} \min_{i\in[k]}\EE\left[f^t_i(S^t)\right].
\end{equation}
We desire algorithms whose $(1-\epsilon)$-regret is sublinear in $T$. That is, we get arbitrarily close to a $(1-\epsilon)$ fraction of the benchmark as $T\rightarrow \infty$.

The offline (Equation~\ref{eq:offline_def}), or online (Equation~\ref{eq:regret}) variations of  robust monotone submodular functions, are known to be NP-hard  to approximate to any polynomial factor when the algorithm's choices are restricted to the family of independent sets $\I$~\citep{krause2008robust}. Therefore, to obtain any reasonable approximation guarantee we need to relax the algorithm's constraint set. Such an approximation approach is called a \emph{bi-criteria} approximation scheme in which the algorithm  outputs a set with a \emph{nearly optimal objective value}, while ensuring that the set used is the \emph{union of only a few independent sets in $\I$.}
More formally, to get a $(1-\epsilon)$-approximate solutions, we may use a set $S$ where $S = S_1\cup \dots\cup S_\ell$ such that $S_1,\dots,S_\ell \in \I$ and $\ell$ is a function of $\frac{1}{\epsilon}$ and other parameters. Since the output set $S$ is possibly infeasible, we define the \emph{violation ratio} $\nu$ as the minimum number of feasible sets whose union is $S$. To exemplify this, consider partition constraints: here we are given a partition \(\{P_1, \ldots, P_q\}\) of the ground set and the goal is to pick a subset that includes at most \(b_j\) elements from part \(P_j\) for each \(j\). Then, the union of $\ell$ feasible sets have at most \(\ell\cdot b_j\) elements in each part. In our example, the violation ratio corresponds to $\nu = \max_{j\in[q]}\lceil|S\cap P_j|/ b_j\rceil$.

\subsection{Our Results and Contributions}\label{sec:results}
We present (nearly tight)  bi-criteria approximation algorithms for the offline and online variations of  robust monotone submodular optimization under matroid constraints.  Throughout the paper, we assume that the matroid is accessible via an independence oracle and the submodular functions are accessible via a value oracle. Moreover, we use $\log_a$ to denote logarithm with base $a$ (when the subscript is not explicit we assume is base 2, i.e, $\log := \log_2$) and $\ln$ to denote the natural logarithm.

For the offline setting of the problem we obtain the following general result:
\begin{theorem}\label{theorem1:offline}
Consider a polynomial time $(1-\beta)$-approximation algorithm $\A$ for the problem of maximizing a single monotone submodular function subject to a matroid constraint. Then, for the offline robust submodular optimization problem \eqref{eq:offline_def}, \nhedit{for any $0<\epsilon<1$,} there is a polynomial time bicriteria algorithm that uses \emph{\texttt{ext}-$\A$ as a subroutine, runs} in \[O\left(\text{\emph{time}}(\A)\cdot \log_{1/\beta}\left(\frac{k}{\epsilon}\right) \cdot\log(n)\cdot\min\left\{\frac{nk}{\epsilon},\log_{1+\epsilon} (\max_{e,j} f_j(e))\right\}\right)\] time and returns a set $S^{\alg}$, such that
\[\EE\left[\min_{i\in[k]} f_i(S^{\alg})\right]\geq (1-\epsilon) \cdot \max_{S\in \I} \min_{j\in[k]} f_j(S),\]
where the expectation is taken over any randomization of $\A$, $S^{\alg}=S_1\cup \dots\cup S_\ell$ with $\ell = \lceil\log_{1/\beta} \frac{k}{\epsilon}\rceil$, and $S_1,\dots,S_\ell \in \I$.
\end{theorem}
The subroutine \texttt{ext}-$\A$ that achieves this result is an extended version of the algorithm $\A$. Since $\A$ outputs a feasible set, then \texttt{ext}-$\A$ reuses $\A$ in an iterative scheme, so that it generates a \emph{small family} of feasible sets whose union achieves the \((1-\epsilon)\)-guarantee. The argument is reminiscent of a well-known fact for submodular function maximization under cardinality constraints: letting the standard greedy run longer results in better approximations at the expense of violating the cardinality constraint. Our extended algorithm \texttt{ext}-$\A$ works in a similar spirit, however it iteratively produces feasible sets in the matroid. This framework generalizes the idea presented in \citep{torrico_etal19}. We emphasize that our procedure does not correspond to an extension of the algorithm for cardinality constraint presented in \citep{krause2008robust}. The natural extension of their algorithm to a single matroid constraint would be to run an algorithm $\A$ over a \emph{larger} feasibility constraint. However, this approach does not provide any bicriteria approximation. The main challenge for matroid constraints is to define an appropriate notion of violation. In our work, we measure violation by the number of independent sets that are needed to cover the given set, as a contrast of \citet{krause2008robust} who define it in terms of cardinality. Our subroutine \texttt{ext}-$\A$ constructs a family of independent sets and that is pivotal to obtain provable guarantees.

A natural candidate for $\A$ is the standard greedy algorithm, from now on referred as \texttt{Greedy}. Since \texttt{Greedy} achieves a $1/2$ approximation factor \citep{fisher1978analysis}, then the number of feasible sets needed is $\ell =\lceil\log_{2} \frac{k}{\epsilon}\rceil$ \citep{torrico_etal19}. Unfortunately, \texttt{Greedy}'s running time $O(n\cdot r)$ depends on the rank of the matroid $r$. This can be computationally inefficient when $r$ is sufficiently large. To improve this, in Section \ref{sec:offline_analysis} we propose an extended version of the \emph{threshold greedy} algorithm introduced by \citet{badanidiyuru_vondrak14}. The main advantage of the threshold greedy algorithm, further referred as \texttt{ThGreedy}, is its running time which does not depend on the rank of the matroid. Formally, we obtain the following corollary.
\begin{corollary}\label{corollary:extended_th_greedy}
For the offline robust submodular optimization problem \eqref{eq:offline_def}, for any $0<\epsilon,\delta<1$, there is a polynomial time bicriteria algorithm that uses \emph{\texttt{ext-ThGreedy}} as a subroutine, runs in \[O\left(\frac{n}{\delta}\cdot\log\left(\frac{n}{\delta}\right) \cdot \log_2\left(\frac{k}{\epsilon}\right) \cdot\log(n)\cdot\min\left\{\frac{nk}{\epsilon},\log_{1+\epsilon} (\max_{e,j} f_j(e))\right\}\right)\] time and returns a set $S^{\alg}$ such that
\[\min_{i\in[k]} f_i(S^{\alg})\geq (1-\epsilon) \cdot \max_{S\in \I} \min_{j\in[k]} f_j(S),\]
where $S^{\alg}=S_1\cup \dots\cup S_\ell$ with $\ell = \lceil\log_2 \frac{k}{\epsilon}\rceil$, and $S_1,\dots,S_\ell \in \I$.
\end{corollary}

To achieve a tight bound on the size of the family $\ell$, we use an improved version of the continuous greedy algorithm \citep{vondrak2008optimal, badanidiyuru_vondrak14} as the inner algorithm $\A$. Since the continuous greedy algorithm achieves a $1-1/e$ approximation ratio, we need $\ell = \lceil\ln \frac{k}{\epsilon}\rceil$ feasible sets to achieve a $1-\epsilon$ fraction of the true optimum, which matches the hardness result presented in \citep{krause2008robust}. This bi-criteria algorithm is much simpler than the one presented in \citep{torrico_etal19}. We present the main offline results and the corresponding proofs in Section~\ref{sec:offline}.

One might hope that similar results can be obtained even when functions are non-monotone (but still submodular).  As we show in Section \ref{sec:necc_mon} this is not possible. To support our theoretical guarantees we provide an exhaustive computational study in Section \ref{sec:offline_experiments}. In this part, we observe that the main computational bottleneck of the bi-criteria algorithms is to \emph{certify} near-optimality of the output solution. To solve this, we present significant implementation improvements such as lazy evaluations and an early stopping criterion, which empirically show how the computational cost can be drastically improved.

Our offline approach is quite flexible in the sense that Theorem \ref{theorem1:offline} uses an arbitrary algorithm $\A$. This allows us to consider different algorithms in the literature of submodular optimization and further extend our results to other classes of constraints, such as knapsack constraints or multiple matroids. We describe these extensions in Section~\ref{sec:extensions}.
 
A natural question is whether our algorithm can be carried over into the online setting, where functions are revealed over time. 
For the online variant, we present the following result:
\begin{theorem}\label{theorem:online}
For the online robust submodular optimization problem \eqref{eq:regret},
\nhedit{for any $0<\epsilon<1$,} there is a randomized polynomial time algorithm that returns a set $S^t$ for each stage $t\in [T]$, we get
\begin{equation*}\sum_{t\in[T]}    \min_{i\in [k]} \EE\left[f^t_i(S^t)\right] \geq   (1-\epsilon) \cdot \max_{S\in \I}  \sum_{t\in[T]} \min_{i\in [k]} f^t_i(S)  -O\left((1-\epsilon)n^{\frac{5}{4}}\sqrt{T}\right),\end{equation*}
 where $S^t=S^t_1\cup \dots\cup S^t_\ell$ with $\ell = \lceil\ln\frac{1}{\epsilon}\rceil$, and $S^t_1,\dots,S^t_\ell \in \I$.
\end{theorem}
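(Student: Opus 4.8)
\emph{Proof proposal.}
The plan is to \emph{online-ize} the randomized/continuous offline algorithm of Section~\ref{sec:continuous_offline_analysis}, whose design the authors already flag as the basis for the online one. First I would lift the problem to the continuous domain. Let $F_i^t$ be the multilinear extension of $f_i^t$ on $[0,1]^V$. If at round $t$ the algorithm maintains a fractional point and rounds it (independently of the adversary, who is non-adaptive), then it suffices to compete against the \emph{continuous} benchmark $\max_{x\in P_\M}\sum_{t\in[T]}\min_{i\in[k]}F_i^t(x)$, where $P_\M$ is the matroid polytope: evaluating at the indicator of the optimal fixed set shows this dominates the integral benchmark $\max_{S\in\I}\sum_t\min_i f_i^t(S)$ in the statement. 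The $\min_{i\in[k]}$ is dealt with exactly as offline, via the zero-sum-game reformulation: a ``function player'' runs multiplicative weights on $\Delta_k$, and against a fixed weight vector $w$ the ``set player'' only needs to (approximately) maximize the single monotone submodular objective $\sum_i w_i F_i^t(\cdot)$ over $P_\M$.

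Next I would replace every internal optimization of the offline algorithm by an online-learning surrogate. The offline construction boosts the approximation ratio from a constant to $(1-\epsilon)$ by stacking $\ell=O(\ln\frac1\epsilon)$ ``layers,'' each layer being (a discretization of) a continuous-greedy / measured Frank--Wolfe pass that adds an independent set to the union; recall that $\ell$ such layers multiply the per-layer constant ratio into a $\big(1-2^{-\Theta(\ell)}\big)$-type guarantee, which is $1-\epsilon$ for $\ell=\Theta(\ln\frac1\epsilon)$. Online, I would run for each layer (and each discretized step within it) an online linear optimization oracle over $P_\M$ --- online gradient descent, or follow-the-perturbed-leader over bases --- updated at round $t$ by the gradients of $\sum_i w_i^t F_i^t$, which are estimated to the required accuracy by random sampling. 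Combining the step-oracles of layer $j$ yields an accumulated point $y_j^t\in P_\M$; rounding each $y_j^t$ independently inside $P_\M$ by swap/pipage rounding gives $S_j^t\in\I$, and the algorithm outputs $S^t=S_1^t\cup\dots\cup S_\ell^t$. Monotonicity and submodularity of each $f_i^t$ on the union --- the same fact used in the offline proof --- ensure that $\EE[\min_i f_i^t(S^t)]$ retains the layered fractional value up to the right factor.

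It then remains to sum the errors. The multiplicative-weights regret of the function player turns per-round performance against $\sum_i w_i^t F_i^t$ into performance against $\min_i F_i^t$ at an additive cost $O(\sqrt{T\ln k})$. The layered continuous-greedy analysis, carried over from the offline proof, supplies the multiplicative $(1-\epsilon)$ factor. Each online linear oracle over $P_\M$ has regret $O(\mathrm{diam}(P_\M)\cdot G\sqrt T)$, and both the diameter and the norm of the (sampled) gradient vectors are $O(\sqrt n)$, so each oracle costs $O(n\sqrt T)$; there are $O(\ln\frac1\epsilon)$ layers, and the extra $n^{1/4}$ in the stated bound comes from balancing the number of discretized continuous-greedy steps and the sampling precision for the gradient estimates against the $\sqrt T$ terms. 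Collecting the three contributions --- and absorbing $O(\sqrt{T\ln k})$ --- gives $\sum_t\min_i\EE[f_i^t(S^t)] \ge (1-\epsilon)\max_{S\in\I}\sum_t\min_i f_i^t(S)-O\!\left(n^{5/4}\sqrt T\,\ln\tfrac1\epsilon\right)$ with $\ell=O(\ln\frac1\epsilon)$.

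The main obstacle is that, unlike offline, the set player cannot \emph{solve} the robust (inner-min) problem at round $t$: it must commit to $S^t$ before $\{f_i^t\}_{i\in[k]}$ is revealed, so robustness must be manufactured purely from no-regret dynamics, and the MWU game, the $\ell$-layer continuous greedy, and the online linear oracles all run concurrently with interacting errors. Concretely, the offline $(1-1/e)$-type guarantee of continuous greedy is proved for a \emph{fixed} weight vector and an \emph{exactly} evaluated multilinear extension; online, the weights $w^t$ drift across rounds and the gradients are only sampled, so one has to re-derive the continuous-greedy/layering guarantee in a \emph{stable} form that tolerates a slowly changing objective and stochastic gradients, and show the extra slack is again $O(\mathrm{poly}(n)\sqrt T)$. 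A secondary, more technical hurdle is controlling the variance of the sampled gradient estimates tightly enough to keep the ambient-dimension dependence at $n^{5/4}$ rather than worse.
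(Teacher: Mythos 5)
Your high-level architecture (multilinear extensions, a layered/continuous-greedy scheme with one online linear-optimization oracle per discretization step over $\P(\M)$, rounding by swap rounding, $\ell=O(\ln\frac1\epsilon)$ layers) matches the paper's algorithm, which indeed runs one FPL instance per discretization index and swap-rounds $y^t_\ell$ in the $\ell$-fold matroid union. But the way you handle the inner $\min_{i\in[k]}$ has a genuine gap. You propose a zero-sum-game reformulation with a ``function player'' running multiplicative weights on $\Delta(k)$ (and you attribute this to the offline proof, which in fact uses truncation and either the averaged function $g$ or an LP feasibility step, not MWU). The problem is that an MWU player whose weights $w^t$ depend only on past rounds yields a guarantee of the form $\sum_t w^t\cdot F^t(y^t)\le \min_i\sum_t F_i^t(y^t)+O(\sqrt{T\ln k})$, i.e.\ it relates your payoff to the \emph{min of sums}. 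The theorem requires a lower bound on the \emph{sum of per-round mins}, $\sum_t\min_i\EE[f_i^t(S^t)]$, and since $\sum_t\min_i\le\min_i\sum_t$ (with an arbitrarily large gap: e.g.\ two functions alternating values $1,0$ and $0,1$ give min-of-sums $T/2$ but sum-of-mins $0$), a lower bound on the min-of-sums does not imply the claimed guarantee. This is precisely why the paper replaces the game dynamics by the soft-min $H^t(y)=-\frac1\alpha\ln\sum_i e^{-\alpha F_i^t(y)}$: it is a \emph{pointwise} lower bound on $\min_i F_i^t(y)$ (up to $\frac{\ln k}{\alpha}$), its gradient is a convex combination of the $\nabla F_i^t$ with weights determined by the \emph{current} round's function values at the current point, and these gradients are only fed to the FPL instances after round $t$ ends, so no look-ahead is needed. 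The continuous-greedy recursion is then run on $\sum_t H^t(y^t_\tau)$ across $\tau$, using the inequality $\Delta H^t(y)\cdot z\ge H^t(x^*)-\sum_i p_i^t(y)F_i^t(y)$, the Taylor/Hessian bound on $H$, and the soft-min property comparing $\sum_i p_i^t F_i^t$ with $H^t$; your proposal has no substitute for this mechanism, and your closing paragraph essentially concedes that manufacturing per-round robustness from no-regret dynamics is the unresolved obstacle.

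Two secondary mismatches. First, the $n^{5/4}$ in the paper does not come from balancing sampling precision of gradient estimates; it comes from the FPL regret $R_\eta=O(\sqrt{DLAT})$ with $D\le\sqrt n$, $L\le n$, $A\le n$ and $\eta=\sqrt{D/(LAT)}$, together with choosing $\alpha=n^2T^2$ and $\delta=n^{-6}T^{-3}$ so that the Taylor and soft-min approximation errors are dominated by $R_\eta$. Second, the theorem's guarantee is on $\min_i\EE[f_i^t(S^t)]$, obtained from swap rounding's per-function guarantee $\EE[f_i^t(S^t)]\ge F_i^t(y^t_\ell)$ applied to the single point $y^t_\ell\in\ell\cdot\P(\M)$ rounded in $\M_\ell$; your statement that the union ``retains'' the value of $\EE[\min_i f_i^t(S^t)]$ is both a different (and harder) quantity and not justified by the cited monotonicity/submodularity facts.
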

We remark that the guarantee of Theorem~\ref{theorem:online} holds with respect to the minimum of $\EE[f_i^t(S^t)]$, as opposed to the guarantee of Theorem~\ref{theorem1:offline} that directly bounds the minimum of $f_i(S)$. Therefore, the solution for the online algorithm is a union of only $\lceil\ln \frac{1}{\epsilon}\rceil$ independent sets, in contrast to the offline solution which is the union of $\lceil\log \frac{k}{\epsilon}\rceil$ independent sets. The main challenge in the online algorithm is to deal with non-convexity and non-smoothness due to submodularity exacerbated by the robustness criteria.   Our approach to coping with the robustness criteria is to use the \emph{soft-min}  function $-\frac{1}{\alpha}\ln \sum_{i\in[k]} e^{-\alpha g_i}$, defined for a collection of smooth functions $\{g_i\}_{i\in[k]}$ and a suitable parameter $\alpha>0$. This function is also known as \emph{log-sum-exp}; for some of its properties and applications we refer the interested reader to \citep{calafiore_ghaoui14}. While the choice of the specific soft-min function is seemingly arbitrary, one feature is crucial for us: its gradient is a convex combination of the gradients of the $g_i$'s. Using this observation, we use parallel instances of the Follow-the-Perturbed-Leader (FPL) algorithm, presented by \citet{kalai_etal05}, one for each discretization step in the continuous greedy algorithm. We believe that the algorithm might be of independent interest to perform online learning over a minimum of several functions, a common feature in robust optimization. The main result and a its proof appears in Section~\ref{sec:online_problem}.

\subsection{Related Work}

Building on the classical work of \citet{nemhauser1978analysis}, constrained submodular maximization problems have seen much progress recently (see for example \citep{calinescu2011maximizing,chekuri2010dependent,buchbinder2014submodular,buchbinder2016comparing}). Robust submodular maximization generalizes submodular function maximization under a matroid constraint for which a $(1-\frac1e)$-approximation is known~\citep{calinescu2011maximizing} and is optimal. The problem has been studied for constant $k$ by \citet{chekuri2010dependent} who give a $(1-\frac1e-\epsilon)$-approximation algorithm with running time $O\left(n^{\frac{k}{\epsilon}}\right)$. Closely related to our problem is the submodular cover problem where we are given a submodular function $f$, a target $b\in \RR_+$, and the goal is to find a set $S$ of minimum cardinality such that $f(S)\geq b$. A simple reduction shows that robust submodular maximization under a cardinality constraint reduces to the submodular cover problem~\citep{krause2008robust}. \citet{wolsey1982analysis} showed that the greedy algorithm gives an $O(\ln{\frac{n}{\epsilon}})$-approximation, where the output set $S$ satisfies $f(S)\geq (1-\epsilon)b$. \citet{krause2008robust} use this approximation to build a bi-criteria algorithm for the cardinality case which achieves tight bounds. This approach falls short of achieving a bi-criteria approximation when the problem is defined over a matroid. A natural extension of their approach to matroid constraints would be to run a single algorithm $\A$ over a \emph{larger} feasibility constraint. However, this procedure does not provide any guarantee. Therefore, the main challenge is to define an appropriate notion of violation. In this work, we measure violation not by cardinality, as in \citep{krause2008robust}, but by the number of independent sets that are needed to cover the given set. Our subroutine \texttt{ext}-$\A$ picks one feasible set at a time and that is crucial to obtain provable guarantees. \citet{powers2016constrained} considers the same robust problem with matroid constraints. However, they take a different approach by presenting a bi-criteria algorithm that outputs a feasible set that is good only for a fraction of the $k$ monotone submodular functions. A deletion-robust submodular optimization model is presented in \citep{krause2008robust}, which is later studied by \citet{orlin_etal16,bogunovic_etal17,kazemi_etal18}. Influence maximization~\citep{kempe_etal03} in a network has been a successful application of submodular maximization and recently, \citet{he_etal16} and \citet{chen_etal16} study the robust influence maximization problem. Robust optimization for non-convex objectives (including submodular functions) has been also considered by \citet{chen2017robust}, however with weaker guarantees than ours due to the extended generality. Specifically, their algorithm outputs $\frac{r\log k}{\epsilon^2\OPT}$ feasible sets whose union achieves a factor of $(1-1/e - \epsilon)$. Finally, \citet{wilder17} studies a similar problem in which the set of feasible solutions is the set of all distributions over independent sets of a matroid. In particular, for our setting \citet{wilder17} gives an algorithm that outputs $O(\frac{\log k}{\epsilon^3})$ feasible sets whose union obtains $(1-1/e)^2$ fraction of the optimal solution. Our results are stronger than the ones obtained by \citet{chen2017robust} and \citet{wilder17}, since we provide the same guarantees using the union of fewer feasible sets. Other variants of the robust submodular maximization problem are studied by \citet{mitrovic_etal18,staib_etal18}.

There has been some prior work on online submodular function maximization that we briefly review here. \citet{streeter_etal08} study the {\it budgeted maximum submodular coverage} problem and consider several feedback cases (denote $B$ a integral bound for the budget): in the full information case, a $(1-\frac{1}{e})$-expected regret of $O(\sqrt{BT\ln n})$ is achieved, but the algorithm uses $B$ experts which may be very large.
In a follow-up work, \citet{golovin_etal14} study the online submodular maximization problem under partition constraints, and then they generalize it to general matroid constraints.
For the latter one, the authors present an online version of the continuous greedy algorithm, which relies on the Follow-the-Perturbed-Leader algorithm of \citet{kalai_etal05} and obtain a $(1-\frac{1}{e})$-expected regret of $O(\sqrt{T})$. Similar to this approach, our bi-criteria online algorithm will also use the Follow-the-Perturbed-Leader algorithm as a subroutine. Recent results on other variants of the online submodular maximization problem are studied in \citep{soma19a} and \citep{zhang_etal19}.

\section{The Offline Case}\label{sec:offline}
In this section, we consider offline robust optimization (Equation~\ref{eq:offline_def})  under matroid constraints.

\subsection{General Offline Algorithm and Analysis}\label{sec:offline_analysis}
In this section, we present a general procedure to achieve a (nearly) tight bi-criteria approximation for the problem of interest and prove \Cref{theorem1:offline}. 

First, consider a non-negative monotone submodular function $g:2^V\to\RR_+$, a matroid $\M=(V,\I)$, and a polynomial time $(1-\beta)$-approximation algorithm $\A$ for the  problem of maximizing $g$ over $\M$. Formally, in the deterministic case, $\A$ outputs a feasible set $S \in \I$ such that
\(g(S)\geq (1-\beta)\cdot\max_{A\in \I}g(A).\)
If $g(\emptyset)\neq 0$, then we define a new function $g':2^V\to\RR_+$ as $g'(A) := g(A) - g(\emptyset)$, which remains being monotone and submodular. The approximation guarantee in this case is
\(g(S)-g(\emptyset)\geq (1-\beta)\cdot\max_{A\in \I}\{g(A)-g(\emptyset)\}.\)
When $\A$ is a randomized algorithm, we say that $\A$ achieves a $(1-\beta)$ factor in expectation, if $\A$ outputs a random feasible set $S\in \I$ such that
\(\EE[g(S)-g(\emptyset)]\geq (1-\beta)\cdot\max_{A\in \I}\{g(A)-g(\emptyset)\}.\)
We define  in Algorithm \ref{alg:general_ext_alg} our main procedure \texttt{ext}-$\A$ as an \emph{extended version} of $\A$ that runs iteratively $\ell\geq 1$ times.
\begin{algorithm}
\caption{General Extended Algorithm for Submodular Optimization, \texttt{ext}-$\A$}\label{alg:general_ext_alg}
\begin{algorithmic}[1]
\Require  $\ell\geq 1$, a monotone submodular function $g:2^{V} \rightarrow \RR_+$, a matroid $\M=(V,\I)$ and algorithm $\A$.
\Ensure sets $S_1,\ldots, S_\ell\in \I$.
\For {$\tau=1, \dots, \ell$}
\State Define $\tilde{g}(S) = g(S\bigcup\cup_{j=1}^{\tau-1} S_j )$.
\State $S_\tau\gets \A(\tilde{g},\M)$.
\EndFor
\end{algorithmic}
\end{algorithm}

Note that function $\tilde{g}$ defined in line 2 of Algorithm \ref{alg:general_ext_alg} is also monotone and submodular. Observe that we can recover algorithm $\A$ by simply considering $\ell = 1$ in \texttt{ext}-$\A$. More importantly, we obtain the following guarantee for \texttt{ext}-$\A$.
\begin{theorem}
\label{thm:extended-gen-alg}
Consider a monotone submodular function $g:2^V\to\RR_+$ with $g(\emptyset)=0$, a matroid $\M=(V,\I)$, and a polynomial time $(1-\beta)$-approximation algorithm $\A$ for the problem of maximizing $g$ over $\M$. For any $\ell\geq 1$, Algorithm \ref{alg:general_ext_alg} returns sets $S_1,\ldots, S_\ell$ such that
\[\EE\left[g\left(\cup_{\tau=1}^\ell S_\tau\right)\right]\geq \left(1-\beta^\ell\right) \cdot \max_{S\in \I} g(S),\]
where the expectation is taken over any randomization of $\A$ when choosing $S_1,\ldots, S_\ell$.
\end{theorem}
\begin{proof} Let us assume that $\A$ is deterministic; in the randomized case the proof follows similarly by taking the corresponding expectations. From the first iteration of Algorithm \ref{alg:general_ext_alg} and using the guarantees of $\A$ we conclude that $g(S_1)-g(\emptyset)\geq \left(1-\beta\right)\cdot\max_{S\in \I}  \left\{g(S)-g(\emptyset)\right\}$.
We use the above statement to prove our theorem by induction. For $\tau=1$, the claim follows directly. Consider any $\ell\geq 2$. Observe that the algorithm in iteration $\tau=\ell$, is exactly algorithm $\A$ run on submodular function $\tilde{g}:2^{V} \rightarrow \RR_+$ where $\tilde{g}(S):= g(S\bigcup\cup_{j=1}^{\ell-1} S_j )$. This procedure returns $S_\ell$ such that
$\tilde{g}(S_\ell)- \tilde{g}(\emptyset)\geq \left(1-\beta\right)\cdot \max_{S\in \I} \left\{\tilde{g}(S)-\tilde{g}(\emptyset)\right\},$
which implies that
\begin{equation*} g\left(\cup_{\tau=1}^\ell S_\tau\right)-g\left(\cup_{\tau=1}^{\ell-1}S_\tau\right)\geq  \left(1-\beta\right)\cdot\max_{S\in \I}\left\{g(S) -   g\left(\cup_{\tau=1}^{\ell-1}S_\tau\right)\right\}. \end{equation*}
By induction we know $ g\left(\cup_{\tau=1}^{\ell-1}S_\tau\right)\geq \left(1-\beta^{\ell-1}\right)\cdot\max_{S\in \I} g(S).$ Thus, we obtain
\begin{equation*}
g\left(\cup_{\tau=1}^{\ell}S_\tau\right)\geq (1-\beta)\cdot \max_{S\in \I}g(S)+\beta\cdot  g\left(\cup_{\tau=1}^{\ell-1}S_\tau\right)\geq \left(1-\beta^\ell\right)\cdot\max_{S\in \I} g(S).
\end{equation*}
\end{proof}
We now apply Theorem~\ref{thm:extended-gen-alg} for the robust submodular
problem, in which we are given monotone submodular functions $f_i:2^V\to\RR_+$ with $f_i(\emptyset) = 0$ for $i\in[k]$. Our main bicriteria algorithm consists in two consecutive steps: (1) get an estimate $\gamma$ of the value $\OPT$ and (2) apply subroutine \texttt{ext}-$\A$ to a convenient function depending on $\gamma$. Formally, we obtain an estimate $\gamma$ on the value of the optimal solution $\OPT:=  \max_{S\in \I} \min_{i\in[k]} f_i(S)$ via a binary search. For the purpose of the proof of Theorem \ref{theorem1:offline}, given parameter $\epsilon>0$, let us assume that $\gamma$ has relative error $1-\frac{\epsilon}{2}$, i.e., \(\left(1-\frac{\epsilon}{2}\right)\OPT \leq \gamma\leq \OPT.\) As in \citep{krause2008robust}, let $g:2^{V}\rightarrow \RR_+$ be defined for any $S\subseteq V$ as follows
\begin{equation}\label{eq:average_function} g(S):= \frac{1}{k} \sum_{i\in[k]} \min\{f_i(S), \gamma\}.\end{equation}
Observe that $\max_{S\in \I} g(S) = \gamma$ whenever $\gamma\leq\OPT$. Moreover, note that $g$ is also a monotone submodular function. Therefore, the second step of the bicriteria algorithm is to run algorithm \texttt{ext}-$\A$ on the function $g$ to obtain a candidate solution. A more detailed description of the algorithm can be found in Section \ref{sec:offline_experiments}.
\begin{proof}[Proof of Theorem \ref{theorem1:offline}]
  We assume $\A$ to be deterministic; the randomized case can be easily proved by consider the proof below for each realization sequence of $S_1,\ldots, S_\ell$. Consider the family of monotone submodular functions \(\{f_i\}_{i \in [k]}\) and define $g$ as in equation \eqref{eq:average_function} using parameter $\gamma$ with relative error of \(1-\frac{\epsilon}{2}\) . If we run Algorithm \ref{alg:general_ext_alg} on $g$ with $\ell\geq \lceil \log_{1/\beta} \frac{2k}{\epsilon} \rceil$, we get a set $S^{\alg}= S_1 \cup\cdots \cup S_\ell$, where $S_j\in\I$ for all $j\in[\ell]$. Moreover, Theorem~\ref{thm:extended-gen-alg} implies that
\[g(S^{\alg})\geq \left(1-\beta^\ell\right)\cdot\max_{S\in \I}g(S) \geq \left(1-\frac{\epsilon}{2k}\right)\cdot\gamma.\]
Now, we will prove that $f_i(S^{\alg})\geq \left(1-\frac{\epsilon}{2}\right)\cdot \gamma$, for all $i\in [k]$. Assume by contradiction that there exists an index $i^*\in[k]$ such that $f_{i^*}(S^{\alg})< \left(1-\frac{\epsilon}{2}\right)\cdot \gamma$. Since, we know that $\min\{f_i(S^{\alg}),\gamma\}\leq\gamma$ for all $i\in[k]$, then
\begin{equation*}
g(S^{\alg}) \leq \frac{1}{k}\cdot f_{i^*}(S^{\alg}) + \frac{k-1}{k}\cdot\gamma <  \frac{1-\epsilon/2}{k}\cdot\gamma + \frac{k-1}{k}\cdot\gamma = \left(1-\frac{\epsilon}{2k}\right)\cdot\gamma,\end{equation*}
contradicting $ g(S^{\alg}) \geq\left(1-\frac{\epsilon}{2k}\right)\cdot\gamma $. Therefore, we obtain
$f_i(S^{\alg})\geq \left(1-\frac{\epsilon}{2}\right)\cdot\gamma \geq (1-\epsilon)\cdot\OPT$,
for all $i\in[k]$ as claimed.
\end{proof}

\paragraph{Running time analysis.} \label{sec:running_time}
In this section, we study the running time of the bi-criteria algorithm we just presented. To show that a set of polynomial size of values for $\gamma$ exists such that one of them satisfies $(1-\epsilon/2)\OPT\leq \gamma\leq \OPT$, we simply try $ \gamma = nf_i(e)(1-\epsilon/2)^j $ for all $i\in [k]$, $e\in V$, and $j=0,\dots,\lceil \ln_{1-\epsilon/2}(1/n)\rceil$. Note that there exists an index $i^*\in [k]$ and a set $S^*\in \I$ such that $\OPT=f_{i^*}(S^*)$. Now let $e^*=\argmax_{e\in S^*}f_{i^*}(e)$. Because of submodularity and monotonicity we have $\frac{1}{\lvert S^*\rvert}f_{i^*}(S^*)\leq f_{i^*}(e^*)\leq f_{i^*}(S^*)$. So, we can conclude that $1\geq \OPT/nf_{i^*}(e^*)\geq 1/n$, which implies that $j=\lceil \ln_{1-\epsilon/2}(\OPT/nf_{i^*}(e^*))\rceil$ is in the correct interval, obtaining
\[ (1-\epsilon/2)\OPT \leq nf_{i^*}(e^*)(1-\epsilon/2)^j\leq \OPT. \]
We remark that the dependency of the running time on $\epsilon$ can be made logarithmic by running a binary search on $j$ as opposed to trying all $j=0,\dots, \lceil \ln_{1-\epsilon/2}(1/n)\rceil$. 
This would take at most $\frac{n k}{\epsilon}\cdot\log n$ iterations. We could also say that doing a binary search to get a value up to a relative error of $1-\epsilon/2$ of $\OPT$ would take $\log_{1+\epsilon}\OPT$. So, we consider the minimum of those two quantities $\min\{\frac{n k}{\epsilon}\cdot \log n,\log_{1+\epsilon}\OPT\}$.
Given that Algorithm \ref{alg:general_ext_alg} runs in $\ell \cdot \text{time}(\A)$ where $\ell = \lceil\log_{1/\beta} \frac{k}{\epsilon}\rceil$ is the number of rounds, we conclude that the bi-criteria algorithm runs in \(O(\text{time}(\A) \cdot\log_{1/\beta}\frac{k}{\epsilon}\cdot\min\{\frac{nk}{\epsilon}\cdot \log n,\log_{1+\epsilon}\OPT\}) \) time.  

\subsubsection{Two Deterministic Classical Algorithms: Greedy and Local Search.}
The most natural candidate for $\A$ is \texttt{Greedy} \citep{nemhauser1978analysis} which achieves a ratio of $1/2$ for the problem of maximizing a single monotone submodular function subject to a matroid constraint \citep{fisher1978analysis}. In this case, we know that $\text{time(\texttt{Greedy})} = O(n\cdot r)$, where $r$ is the rank of the matroid. Using \texttt{Greedy}, we are able to design its extended version \texttt{ext-Greedy}, formally outlined in Algorithm \ref{alg:ext_greedy}. 
\begin{algorithm}
\caption{Extended Greedy Algorithm for Submodular Optimization, \texttt{ext-Greedy}}\label{alg:ext_greedy}
\begin{algorithmic}[1]
\Require  $\ell\geq 1$, monotone submodular function $g:2^{V} \rightarrow \RR_+$, matroid $\M=(V,\I)$.
\Ensure sets $S_1,\ldots, S_\ell\in \I$.
\For {$\tau=1, \dots, \ell$}
\State $S_\tau\gets \emptyset$
\While {$S_\tau$ is not a basis of $\M$}
	 \State {\bf Compute} \( e^* = \argmax_{S_\tau+ e\in \I} \{g(\cup_{j=1}^{\tau} S_j + e)\}. \)
	 \State {\bf Update} $ S_\tau\gets S_\tau+e^*.$
\EndWhile
\EndFor
\end{algorithmic}
\end{algorithm}

From Theorem \ref{theorem1:offline} we can easily derive the following corollary for \texttt{ext-Greedy}:
\begin{corollary}\label{theorem:extended_greedy}
For the offline robust submodular optimization problem \eqref{eq:offline_def}, for any $0<\epsilon,\delta<1$, there is a polynomial time bicriteria algorithm that uses \emph{\texttt{ext-Greedy} as a subroutine, runs} in \[O\left(n\cdot r \cdot \log_2\left(\frac{k}{\epsilon}\right) \cdot\log(n)\cdot\min\left\{\frac{nk}{\epsilon},\log_{1+\epsilon} (\max_{e,j} f_j(e))\right\}\right)\] time and returns a set $S^{\alg}$ such that
\[\min_{i\in[k]} f_i(S^{\alg})\geq (1-\epsilon) \cdot \max_{S\in \I} \min_{j\in[k]} f_j(S),\]
where $S^{\alg}=S_1\cup \dots\cup S_\ell$ with $\ell = \lceil\log_2 \frac{k}{\epsilon}\rceil$, and $S_1,\dots,S_\ell \in \I$.
\end{corollary}
Another natural candidate is the standard local search algorithm, from now on referred as \texttt{LS} \citep{fisher1978analysis}. Roughly speaking, this algorithm starts with a maximal feasible set and iteratively swap elements if the objective is improved while maintaining feasibility. If the objective cannot be longer improved, then the algorithm stops. \citet{fisher1978analysis} prove that this procedure also achieves a $1/2$-approximation, i.e., $\beta = 1/2$. However, the running time of \texttt{LS} cannot be explicitly obtained. Finally, for the offline robust problem the extended version of the local search algorithm, \texttt{ext-LS}, achieves the same guarantees than \texttt{ext-Greedy}, but without an explicit running time.

\subsubsection{Improving Running Time: Extended Threshold Greedy.}\label{sec:th_greedy}

As we mentioned earlier, we are interested in designing efficient bi-criteria algorithms for the robust submodular problem \eqref{eq:offline_def}. Unfortunately, the subroutine \texttt{ext-Greedy} performs $O(n\cdot r\cdot\log_2 \frac{k}{\epsilon})$ function calls, which can be considerably inefficient when $r$ is sufficiently large. Our objective in this section is to study a variant of the standard greedy algorithm that perform less function calls and whose running time does not depend on the rank of the matroid. 
For our purposes, we consider the \emph{threshold greedy} algorithm, from now on referred as \texttt{ThGreedy}, introduced by \citet{badanidiyuru_vondrak14}. Roughly speaking, this procedure iteratively adds elements whose marginal value is above certain threshold while maintaining feasibility. Unlike \texttt{Greedy}, \texttt{ThGreedy} may add more than one element in a single iteration. For cardinality constraint, \citet{badanidiyuru_vondrak14} show that the threshold greedy algorithm achieves a $(1-1/e-\delta)$-approximation factor, where $\delta$ is the parameter that controls the threshold. Moreover, $\text{time}(\text{\texttt{ThGreedy}}) = O(\frac{n}{\delta}\log\frac{n}{\delta})$, which does not depend on the rank of the matroid. We formalize its extended version, \texttt{ext-ThGreedy}, in Algorithm \ref{alg:ext_th_greedy}. The original version corresponds to considering $\ell = 1$.
\begin{algorithm}[h]
\caption{Extended Threshold-Greedy, \texttt{ext-ThGreedy}}\label{alg:ext_th_greedy}
\begin{algorithmic}[1]
\renewcommand{\algorithmicrequire}{\textbf{Input:}}
\renewcommand{\algorithmicensure}{\textbf{Output:}}
\Require $\ell\geq 1$, ground set $V$ with $n:=|V|$, monotone submodular function $g:2^{V} \rightarrow \RR_+$, matroid $\M=(V,\I)$ and $\delta>0$.
\Ensure feasible sets $S_1,\ldots, S_\ell\in \I$.
\For {$\tau=1, \dots, \ell$}
\State $S_\tau\leftarrow\emptyset$
\State $d\leftarrow\max_{e\in V} g(\cup_{j=1}^{\tau-1} S_j + e)$
\For {$(\omega=d; \omega\geq \frac{\delta}{n}d; \omega\leftarrow (1-\delta)\omega)$}
\For {$e\in V\backslash S_\tau$}
	\If {$S_\tau+e \in \I$ and $g_{\cup_{j=1}^{\tau} S_j}(e)\geq \omega$}
	\State $S_\tau\leftarrow S_\tau + e$
	\EndIf
\EndFor
\EndFor
\EndFor
\end{algorithmic}
\end{algorithm}
Similarly than \citep{badanidiyuru_vondrak14}, we can prove the following guarantee of \texttt{ThGreedy} when is used for the problem of maximizing a single monotone submodular function subject to a matroid constraint.
\begin{corollary}\label{cor:threshold_greedy}
For any $\delta>0$, \emph{\texttt{ThGreedy}} achieves a $\left(1- \frac{1}{2-\delta} \right) $-approximation for the problem of maximizing a single monotone submodular function subject to a matroid constraint, using $O(\frac{n}{\delta}\cdot \log\frac{n}{\delta})$ queries.
\end{corollary}
For a detailed proof of Corollary \ref{cor:threshold_greedy}, we refer the interested reader to the Appendix. Given the previous result, we easily obtain Corollary \ref{corollary:extended_th_greedy} using $\beta = \frac{1}{2-\delta}$ in Theorem \ref{theorem1:offline}. The most relevant feature of \texttt{ext-ThGreedy} is its running time which is independent on the rank of the matroid.

\subsubsection{Tight Bounds: Extended Continuous Greedy.}\label{sec:cont_greedy}
To achieve a tight bound on the number of feasible sets in Theorem \ref{theorem1:offline}, we need to make use of the \emph{continuous greedy algorithm} \citep{vondrak2008optimal}, from now on referred as \texttt{CGreedy}. Before explaining the algorithm, let us recall some preliminary definitions. We denote  the indicator vector of a set $S\subseteq V$ by $\one_S\in\{0,1\}^V$, where $\one_S(e) = 1$ if $e\in S$ and zero otherwise; and the matroid polytope by $\P(\M)  = \mathrm{conv}\{\one_S \mid S \in \I\}$. For any non-negative set function $g:2^V\to\RR_+$, its {\it multilinear extension} $G: [0,1]^V \rightarrow \RR_+$ is defined for any $y\in [0,1]^V$ as the expected value of $g(S_y)$, where $S_y$ is the random set generated by drawing independently each element $e\in V$ with probability $y_e$. Formally,
\begin{equation}\label{eq:ML_def} G(y)  = \EE_{S\sim y}[g(S)]=\sum_{S \subseteq V} g(S) \prod_{e \in S} y_e \prod_{e \notin S} (1-y_{e}).\end{equation}
Observe, this is in fact an extension of $g$, since for any subset $S\subseteq V$, we have $g(S)=G(\one_S)$. For any $x,y\in[0,1]^V$, we will denote $x\vee y$ the vector whose components are $[x\vee y]_e = \max\{x_e,y_e\}$.
\begin{fact}{\emph{\citep{calinescu2011maximizing}}.}\label{fact:ML} Let $g$ be a monotone submodular function and $G$ its multilinear extension:
  \begin{enumerate}
  \item  By monotonicity of $g$, we have $\frac{\partial G}{\partial y_e} \geq 0$ for any $e\in V$. This implies that for any $x\leq y$ coordinate-wise, $G(x)\leq G(y)$. On the other hand, by submodularity of $g$, $G$ is concave in any positive direction, i.e., for any $e_1, e_2\in V$ we have $\frac{\partial^2 G}{\partial y_{e_1} \partial y_{e_2}} \leq 0$.

\item Throughout the rest of this paper we will denote by $\nabla_e G(y):=\frac{\partial G}{\partial y_e}$, and 
\begin{equation}\label{eq:delta_ML}\Delta_eG(y):=\EE_{S\sim y}[g_S(e)].\end{equation} It is easy to see that $\Delta_eG(y)=(1-y_e)\nabla_e G(y)$. Moreover, for any $x,y\in[0,1]^V$ it is easy to prove by using submodularity that
\begin{equation}\label{eq:grad} 
G(x\vee y)\leq G(x)+ \Delta G(x)\cdot y\leq  G(x) +  \nabla G(x)\cdot y.
\end{equation}
\end{enumerate}
\end{fact}

Broadly speaking, \texttt{CGreedy} works as follows: the algorithm starts with the empty set $y_0 = 0$ and for every $t\in[0,1]$ continuously finds a feasible direction $z$ that maximizes $\nabla G(y_t)\cdot z$ over $\P(\M)$, where $y_t$ is the current fractional point. Then, \texttt{CGreedy} updates $y_t$ according to $z$. Finally, the algorithm outputs a feasible set by rounding $y_1$ according to pipage rounding \citep{ageev_sviridenko04}, randomized pipage rounding or randomized swap rounding \citep{chekuri2010dependent}. All these rounding procedures satisfy the following property: if $S$ is the result of rounding $y_1$, then $S\in \I$ and $\EE[g(S)]\geq G(y_1)$, where the expectation is taken over any randomization. 

Notably, \citet{vondrak2008optimal} proved that \texttt{CGreedy} finds a feasible set $S$ such that $\EE[g(S)]\geq (1-1/e)\cdot\max_{A\in\I} g(A)$. Unfortunately, $\text{time}(\texttt{CGreedy}) = O(n^8)$ due to the large number of samples required to accurately evaluate the multilinear extension. This running time can be substantially improved by using an accelerated version of the continuous greedy (\texttt{ACGreedy}) introduced by \citet{badanidiyuru_vondrak14}. \texttt{ACGreedy} generalizes the idea of \texttt{ThGreedy} to the continuous framework and outputs a random feasible set $S$ such that $\EE[g(S)]\geq (1-1/e-\delta)\cdot\max_{A\in\I} g(A)$, where $\delta$ is the parameter that controls the threshold. More importantly,  \texttt{ACGreedy} runs in $O(rn\delta^{-4}\log^2 n)$, where $r$ is the rank of the matroid. Therefore, by using \texttt{ext-ACGreedy} with $\beta = 1/e+\delta$ in Theorem \ref{theorem1:offline} we obtain the following corollary
\begin{corollary}\label{cor:extended_acel_cont_greedy}
For the offline robust submodular optimization problem \eqref{eq:offline_def}, for any $0<\epsilon<1$,  there is a polynomial time bicriteria algorithm that uses \emph{\texttt{ext-ACGreedy}} as a subroutine, runs in \[O\left(r\cdot n\cdot\delta^{-4}\cdot\log^2(n)\cdot \ln\left(\frac{k}{\epsilon}\right) \cdot\log(n)\cdot\min\left\{\frac{nk}{\epsilon},\log_{1+\epsilon} (\max_{e,j} f_j(e))\right\}\right)\] time and returns a set $S^{\alg}$ such that
\[\EE\left[\min_{i\in[k]} f_i(S^{\alg})\right]\geq (1-\epsilon) \cdot \max_{S\in \I} \min_{j\in[k]} f_j(S),\]
where $S^{\alg}=S_1\cup \dots\cup S_\ell$ with $\ell = \lceil\ln \frac{k}{\epsilon}\rceil$, and $S_1,\dots,S_\ell \in \I$.
\end{corollary}
As we can see the number of independent sets required for obtaining this result $\ell = \lceil\ln \frac{k}{\epsilon}\rceil$ is smaller up to a constant than the number of sets obtained by \texttt{ext-Greedy}, $\ell = \lceil\log_2 \frac{k}{\epsilon}\rceil$. More importantly, Corollary \ref{cor:extended_acel_cont_greedy} matches the hardness results given by \citet{krause2008robust}.

\subsection{Necessity of monotonicity}
\label{sec:necc_mon}

 In light of the approximation algorithms for non-monotone submodular function maximization under matroid constraints (see, for example, \citep{lee2009non}), one might hope that an analogous bi-criteria approximation algorithm could exist for robust non-monotone submodular function maximization. However, we show that even without any matroid constraints, getting any approximation in the non-monotone case is $\text{NP}$-hard.

\begin{lemma}
	Unless $P=NP$, no polynomial time algorithm can output a set $\tilde{S}\subseteq V$ given general submodular functions $f_1,\dots,f_k$ such that $\min_{i\in[k]}f_i(\tilde{S})$ is within a positive factor of $\max_{S\subseteq V}\min_{i\in[k]}f_i(S)$.
\end{lemma}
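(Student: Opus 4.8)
The plan is to exhibit a reduction from an NP-hard decision problem, encoding ``yes'' instances as ones where $\max_{S}\min_i f_i(S)$ is strictly positive and ``no'' instances as ones where it equals $0$; since any multiplicative approximation of a quantity must in particular distinguish $0$ from positive, such an algorithm would decide the NP-hard problem. The natural source problem is an NP-complete combinatorial problem whose feasibility is equivalent to a disjointness/covering condition that can be read off from the simultaneous positivity of several submodular (indeed, modular or coverage-type) functions.

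First I would pick a concrete hard problem; \textbf{set cover} (or equivalently \textbf{hitting set}) is convenient. Given a ground set and a family of subsets together with a budget, deciding whether there is a sub-family of size at most the budget covering everything is NP-hard. I would let $V$ be (a copy of) the set family, introduce for each element to be covered a coverage function $g_e(S) = \min\{1, |\{T \in S : e \in T\}|\}$ (monotone submodular, in fact a coverage function), and introduce one more function that is \emph{decreasing} in $|S|$ to penalize using more than the budget many sets — for instance $h(S) = \max\{0,\, b - |S| + 1\}$ scaled appropriately, which is non-monotone submodular. Then $\min\big(\min_e g_e(S),\, h(S)\big) > 0$ precisely when $S$ covers all elements using at most $b$ sets. (A small amount of care is needed so that all the $g_e$ and $h$ land on a common positive value on ``yes'' instances; scaling each to take value in $\{0,1\}$ on the relevant sets handles this, and one checks $h$ is genuinely submodular since $|S|\mapsto \max\{0,b-|S|+1\}$ is concave-then-flat in the size.) Thus $\max_{S\subseteq V}\min_{i} f_i(S) \in \{0\} \cup [c,\infty)$ for an explicit constant $c>0$, with the positive case occurring iff the set-cover instance is a ``yes'' instance.

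The formal argument then runs: suppose $\mathcal A$ is a polynomial-time algorithm that on input $f_1,\dots,f_k$ outputs $\tilde S$ with $\min_i f_i(\tilde S) \ge \rho \cdot \max_S \min_i f_i(S)$ for some fixed $\rho>0$. Run $\mathcal A$ on the instance produced by the reduction and evaluate $\min_i f_i(\tilde S)$, which is a polynomial number of value-oracle calls. If the original instance is ``yes'', the optimum is $\ge c$, so $\mathcal A$ must return $\tilde S$ with $\min_i f_i(\tilde S) \ge \rho c > 0$; if it is ``no'', every set has $\min_i f_i(S) = 0$, so in particular $\min_i f_i(\tilde S)=0$. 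Hence testing whether $\min_i f_i(\tilde S)>0$ decides set cover in polynomial time, forcing $P=NP$.

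The main obstacle is the modeling step, not the reduction logic: I need the family $\{f_i\}$ to be genuinely submodular (the coverage $g_e$'s are fine; the budget-penalty $h$ is the delicate one, and it must be non-monotone submodular with $h(\emptyset)$ not interfering — one may want $h(\emptyset)$ large and $h$ dropping to $0$ once the size exceeds $b$), and I need a uniform positive gap $c$ so that ``approximation to within a positive factor'' is meaningful. An alternative that sidesteps the penalty function is to reduce instead from a problem that is already phrased as simultaneous coverage with no cardinality side constraint — e.g.\ encode $3$-colorability or a $2$-to-$2$-type constraint satisfaction instance where each $f_i$ is the (submodular) number of satisfied constraints of type $i$ under a cut/assignment induced by $S$; then ``yes'' means all $f_i$ can be made simultaneously large and ``no'' means some $f_i$ is forced to $0$. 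I would present whichever of these yields the cleanest verification that all $f_i$ are submodular; the set-cover-with-penalty version is the most transparent and is what I would write up first, falling back to the CSP version only if checking submodularity of $h$ proves awkward.
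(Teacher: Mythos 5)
Your overall reduction logic is exactly right and matches the paper's: build instances where the optimum of $\max_S\min_i f_i(S)$ is either $0$ or bounded away from $0$, so that any approximation within a positive factor would decide an NP-hard problem. The gap is in the modeling step you yourself flag as delicate: the budget penalty $h(S)=\max\{0,\,b-|S|+1\}$ is \emph{not} submodular. For a function depending only on $|S|$, submodularity is equivalent to concavity of the profile $s\mapsto\phi(s)$, i.e.\ non-increasing increments; here the marginal of adding an element is $-1$ while $|S|\le b$ and $0$ once $|S|\ge b+1$, so the marginals \emph{increase} with the set and $h$ is supermodular at the kink (``decreasing-then-flat'' is convex, not concave). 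Worse, this cannot be patched within your framework: any nonnegative submodular $h(S)=\phi(|S|)$ that vanishes on all sets of size $>b$ (with $b\le n-2$) must satisfy, by concavity, $\phi(b)\le 2\phi(b+1)-\phi(b+2)=0$, so no submodular cardinality penalty can be positive exactly on the feasible budgets. Hence the set-cover-with-penalty construction, as written, does not produce a valid instance of the problem.

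Your fallback (a CSP-style encoding with one function per constraint) is the right fix and is essentially what the paper does, but you leave it as a sketch. The paper reduces from \textsc{Sat}: for each clause $\bigvee_{i\in A}x_i\vee\bigvee_{i\in B}\overline{x_i}$ it defines the nonnegative \emph{linear} (hence submodular) function $f(S)=|S\cap A|+|B\setminus S|$, which is positive iff the assignment ``variables in $S$ are true'' satisfies that clause; so all clause-functions are simultaneously positive iff the formula is satisfiable, and no cardinality side constraint (and thus no penalty function) is needed because the lemma concerns unconstrained maximization over $S\subseteq V$. To complete your write-up you would need to carry out this (or an equivalent) per-constraint construction explicitly rather than the set-cover route.
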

\begin{proof}
	We use a reduction from \textsc{Sat}. Suppose that we have a \textsc{Sat} instance with variables $x_1,\dots,x_n$. Consider $V=\{1,\dots,n\}$. For every clause in the \textsc{Sat} instance we introduce a nonnegative linear (and therefore submodular) function. For a clause $\bigvee_{i\in A}x_i\vee \bigvee_{i\in B} \overline{x_i}$ define
	\[ f(S) := \lvert S\cap A\rvert+\lvert B\setminus S\rvert. \]
	It is easy to see that $f$ is linear and nonnegative. If we let $S$ be the set of true variables in a truth assignment, then it is easy to see that $f(S)>0$ if and only if the corresponding clause is satisfied. Consequently, finding a set $S$ such that all functions $f$ corresponding to different clauses are positive is as hard as finding a satisfying assignment for the \textsc{Sat} instance.
\end{proof}

\subsection{Computational Study}\label{sec:offline_experiments}

Our objective in this section is to empirically demonstrate that natural candidate subroutines for the offline robust problem \eqref{eq:offline_def}, such as \texttt{ext-Greedy}, can be substantially improved. For this, we will make use of \texttt{ext-ThGreedy} and other implementation improvements such as lazy evaluations and an early stopping rule. Ultimately, with these tools we are able to design an efficient bi-criteria algorithm.

First, let us recall how the general bi-criteria algorithm works. In an outer loop we obtain an estimate $\gamma$ on the value of the optimal solution $\OPT:=  \max_{S\in \I} \min_{i\in[k]} f_i(S)$ via a binary search. Next, for each guess $\gamma$ we define a new submodular set function as \(g(S):= \frac{1}{k} \sum_{i\in[k]} \min\{f_i(S), \gamma\}.\) Finally, we run Algorithm \ref{alg:general_ext_alg} to obtain a candidate solution. Depending on this result, we update the binary search on $\gamma$, and we iterate. We stop the binary search whenever we get a relative error of $1-\epsilon/2$, namely, $(1-\epsilon/2)\OPT\leq\gamma\leq\OPT$. 

\paragraph{Reducing the number of function calls.}
In Section \ref{sec:th_greedy}, we theoretically showed that by using the subroutine \texttt{ext-ThGreedy} we can get (nearly) optimal objective value using less function evaluations than by using \texttt{ext-Greedy} at cost of producing a slightly bigger family of feasible sets. Therefore, we will use \texttt{ext-ThGreedy} for our efficient bi-criteria algorithm.

\paragraph{Certifying (near)-optimality.} 
The main bottleneck of any bi-criteria algorithm remains obtaining a certificate of (near)-optimality or equivalently, a good upper bound on the optimum. We obtain that the optimum value is at most $\gamma$ whenever running \texttt{ext-$\A$} on function $g$ fails to return a solution of desired objective. Due to the desired accuracy in binary search and the number of steps in the extended algorithm, obtaining good upper bounds on the optimum is computationally prohibitive. We resolve this issue by implementing an early stopping rule in the bi-criteria algorithm.  When running \texttt{ext-$\A$} on  function \(g\) (as explained above) we use the stronger guarantee given in Proposition~\ref{thm:extended-gen-alg}: for any iteration $\tau\in[\ell]$ we obtain a set $S_\tau$ such that $g(\cup_{j=1}^\tau S_j)\geq (1-\beta^\tau)\cdot\gamma$. If in some iteration $\tau\in[\ell]$ the algorithm does not satisfy this guarantee, then it means that $\gamma$ is much larger than $\OPT$. In such case, we stop \texttt{ext-$\A$} and update the upper bound on $\OPT$ to be $\gamma$. This allows us to stop the iteration much earlier since in many real instances $\tau$ is typically much smaller than $\ell$ when $\gamma$ is large. This leads to a drastic improvement in the number of function calls as well as CPU time. 

\paragraph{Lazy evaluations.} All greedy-like algorithms and baselines implemented in this section make use of lazy evaluations \citep{minoux_78}. This means, we keep a list of an upper bound $\rho(e)$ on the marginal gain for each element (initially $\infty$) in decreasing order, and at each iteration, it evaluates the element at the top of the list $e'$. If the marginal gain of this element satisfies $g_S(e')\geq \rho(e)$ for all $e\neq e'$, then submodularity ensures $g_S(e')\geq g_S(e)$. In this way, for example, \texttt{Greedy} does not have to evaluate all marginal values to select the best element.

\paragraph{Bounds initialization.} To compute the initial $\lb$ and $\ub$ for the binary search, we run the lazy greedy \citep{minoux_78} for each function in a small sub-collection $\{f_i\}_{i\in[k']}$, where $k'\ll k$, leading to $k'$ solutions $A^1,\ldots, A^{k'}$ with guarantees $f_i(A^i)\geq (1/2)\cdot \max_{S\in \I}f_i(S)$. Therefore, we set $\ub = 2\cdot \min_{i\in[k']}f_i(A^i)$  and $\lb = \max_{j\in[k']}\min_{i\in[k]} f_i(A^j)$. This two values correspond to upper and lower bounds for the true optimum $\OPT$.

To facilitate the interpretation of our theoretical results, we will consider partition constraints in all experiments: the ground set $V$ is partitioned in $q$ sets $\{P_1,\ldots,P_q\}$ and the family of feasible sets is $\I = \{S: \ |S\cap P_j| \leq b, \ \forall j\in[q]\}$, same budget $b$ for each part. We test five methods: \texttt{prev-extG} the extended greedy algorithm with no improvements, and the rest with improvements: \texttt{ext-Greedy}, \texttt{ext-ThGreedy}, and \texttt{ext-SGreedy}. The last method is an heuristic that uses the \emph{stochastic greedy} algorithm \citep{mirzasoleiman_etal15} adapted to partition constraints (see Appendix). The vanilla version of this algorithm samples a smaller ground set in each iteration and optimize accordingly. Also, we tested the extended version of local search, but due to its bad performance compared to greedy-like algorithms, we do not report the results. For the final pseudo-code of the main algorithm, we refer the interested reader to the Appendix. 

After running the four algorithms, we save the solution $S^{\alg}$ with the largest violation ratio $\nu = \max_{j\in[q]}\lceil|S\cap P_j|/ b\rceil$, and denote by $\tau_{\max} := \left\lceil \nu\right\rceil$. Observe that $S^{\alg} = S_1\cup\ldots\cup S_{\tau_{\max}}$ where $S_\tau\in\I$ for all $\tau\in[\tau_{\max}]$. We consider two additional baseline algorithms (without binary search): Random Selection (\texttt{RS}) which outputs a set $\tilde{S} = \tilde{S}_1\cup\ldots\cup \tilde{S}_{\tau_{\max}}$ such that for each $\tau\in[\tau_{\max}]$: $\tilde{S}_\tau$ is feasible, constructed by selecting elements uniformly at random, and $|\tilde{S}_\tau\cap P_j| = |S_\tau\cap P_j|$ for each part $j\in[q]$. Secondly,  (\texttt{G-Avg}) we run  $\tau_{\max}$ times the lazy greedy algorithm on the average function $\frac1k\sum_{i\in[k]}f_i$ and considering constraints $\I_\tau = \{S: \ |S\cap P_j| \leq |S_\tau\cap P_j|, \ \forall j\in[q]\}$ for each iteration $\tau\in[\tau_{\max}]$.

In all experiments we consider the following parameters: approximation $1-\epsilon = 0.99$, threshold $\delta = 0.1$, and sampling in \texttt{ext-SGreedy} with $\epsilon' = 0.1$. The composition of each part $P_j$ is always uniformly at random from $V$.

\subsubsection{Non-parametric Learning.}
We follow the setup in \citep{mirzasoleiman_etal15}. Let $X_V$ be a set of random variables corresponding to bio-medical measurements, indexed by a ground set of patients $V$. We assume $X_V$ to be a Gaussian Process (GP), i.e., for every subset $S\subseteq V$, $X_S$ is distributed according to a multivariate normal distribution $\mathcal{N}(\muB_S,\SigmaB_{S,S})$, where $\muB_S = (\mu_{e})_{e\in S}$ and $\SigmaB_{S,S} = [\K_{e,e'}]_{e,e'\in S}$ are the prior mean vector and prior covariance matrix, respectively. The covariance matrix is given in terms of a positive definite kernel $\K$, e.g., a common choice in practice is the squared exponential kernel $\K_{e,e'} =\exp(-\|x_e-x_{e'}\|^2_2/h)$. Most efficient approaches for making predictions in GPs rely on choosing a small subset of data points. For instance, in the Informative Vector Machine (IVM) the goal is to obtain a subset $A$ such that maximizes the information gain, \(f(A) = \frac{1}{2}\log\text{det}(\ident + \sigma^{-2}\SigmaB_{A,A})\), which is known to be monotone and submodular \citep{krause_guestrin05}.
In our experiment, we use the Parkinson Telemonitoring dataset \citep{tsanas_etal10} consisting of $n = 5,875$ patients with early-stage Parkinsons disease and the corresponding bio-medical voice measurements with 22 attributes (dimension of the observations). We normalize the vectors to zero mean and unit norm. With these measurements we computed the covariance matrix $\Sigma$ considering the squared exponential kernel with parameter $h=0.75$. 
 For  our robust criteria, we consider $k=20$ perturbed versions of the information gain defined with $\sigma^2=1$, i.e., Problem \eqref{eq:offline_def} corresponds to \(\max_{A\in \I}\min_{i\in[20]}f(A) + \sum_{e\in A\cap \Lambda_i}\eta_e\), where \(f(A)=\frac{1}{2}\log\text{det}(\ident + \SigmaB_{AA})\), \(\Lambda_i \) is a random set of size 1,000 with different composition for each $i\in[20]$, and $\eta\sim[0,1]^V$ is a uniform error vector.

We made 20 random runs considering $q = 3$ parts and budget $b=5$. We report the results in Figures \ref{fig:experiments} (a)-(d). In Figures \ref{fig:experiments}  (a) and (b), we show the performance profiles for the running time and the number of function calls of all methods, respectively. The y-axis corresponds to the fraction of the instances in which a specific method performs less than a multiplicative factor (x-axis) with respect to the best performance. For example, we observe in (b) that in only 20\% of the instances the method \texttt{prev-extG} uses less than 2.5 times the number of function calls used by the best method. We also note that any of the three algorithms clearly outperform \texttt{prev-extG}, either in terms of running time (a) or function calls (b). With this, we show empirically that our implementation improvements help in the performance of the algorithm. We also note that \texttt{ext-SGreedy} is likely to have the best performance. Box-plots for the function calls in Figure \ref{fig:experiments} (d) confirm this fact, since \texttt{ext-SGreedy} has the lowest median. Each method is presented in the x-axis and the y-axis corresponds to the number of function calls (the orange line is the median and circles are outliers). In this figure, we do not present the results of \texttt{prev-extG} because of the difference in magnitude. Finally, in (c) we present the objective values (y-axis) obtained in a single run with respect to the number of feasible sets needed to cover the given set (x-axis). For example, when the set constructed by each method is 2 times the size of a feasible set, most of the procedures have an objective value around 10.  We observe that the stopping rule is useful since the three tested algorithms output a nearly optimal solution earlier (using around 3 times the size of a feasible set) outperforming \texttt{prev-extG} and the benchmarks (which need around 6 times), and moreover, at much less computational cost as we mentioned before.

\begin{figure}[ht]
{\begin{tabular}{ccc}
\def\arraystretch{0.5}
\includegraphics[width=0.3\textwidth]{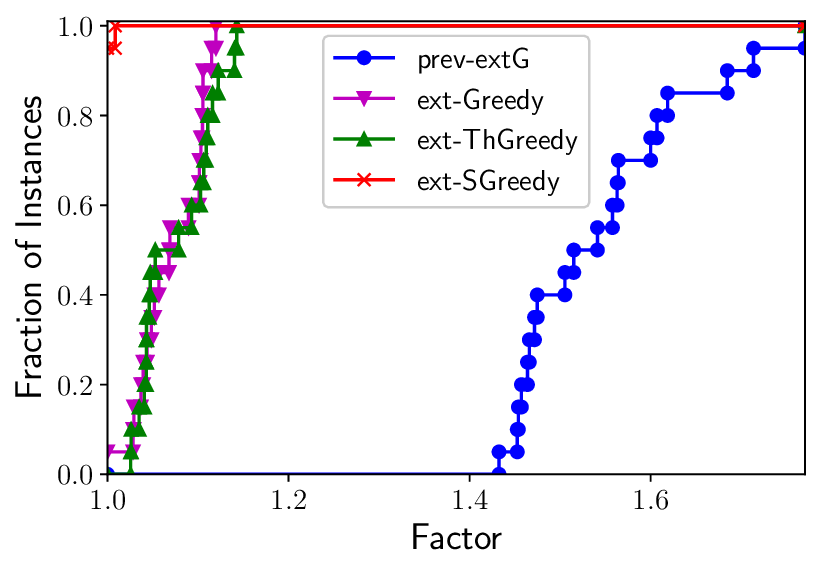} &
\includegraphics[width=0.3\textwidth]{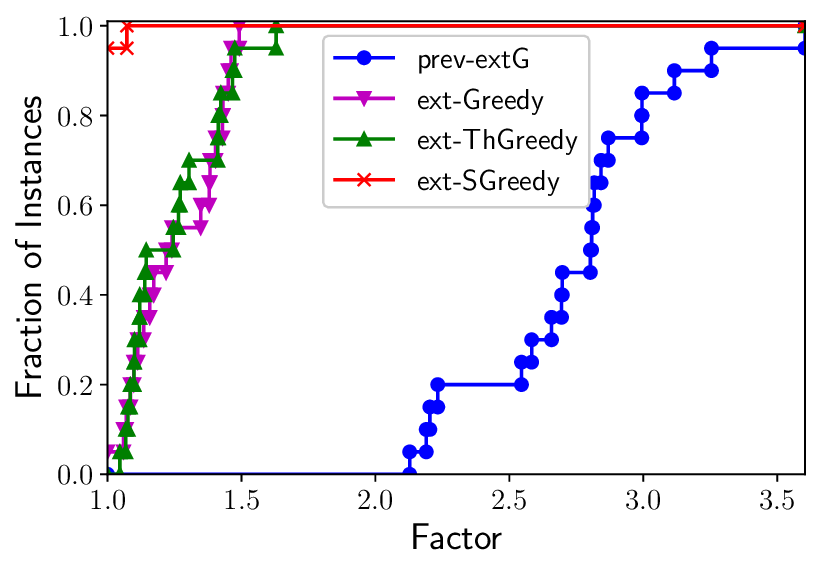} &
\includegraphics[width=0.3\textwidth]{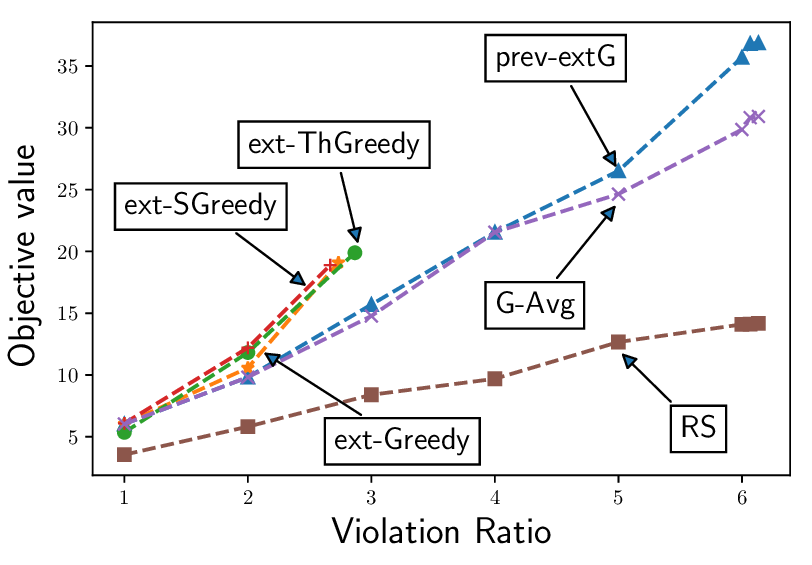} \\
 (a) & (b) & (c) \\
\includegraphics[width=0.3\textwidth]{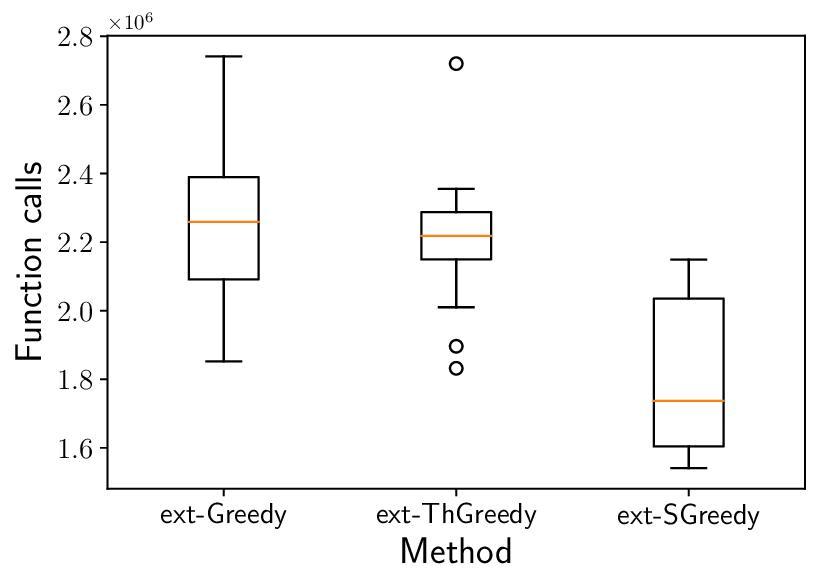} &
\includegraphics[width=0.3\textwidth]{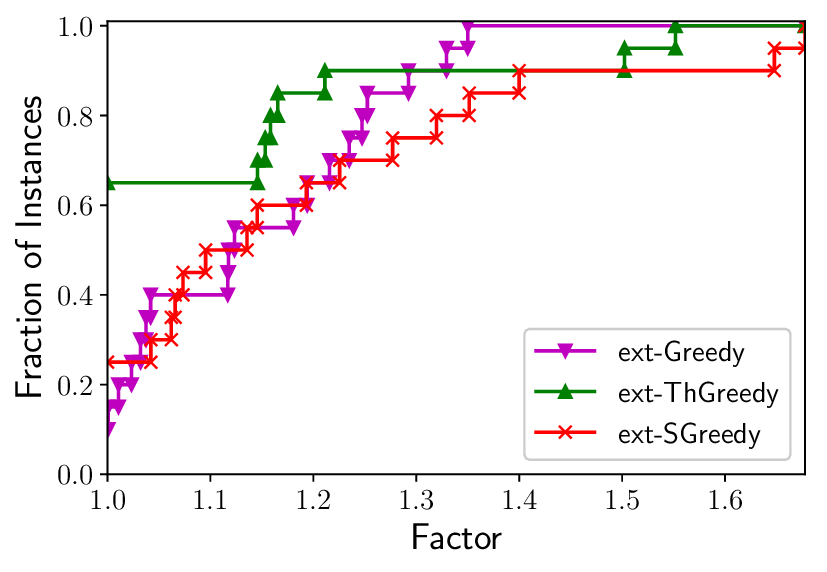}&
\includegraphics[width=0.3\textwidth]{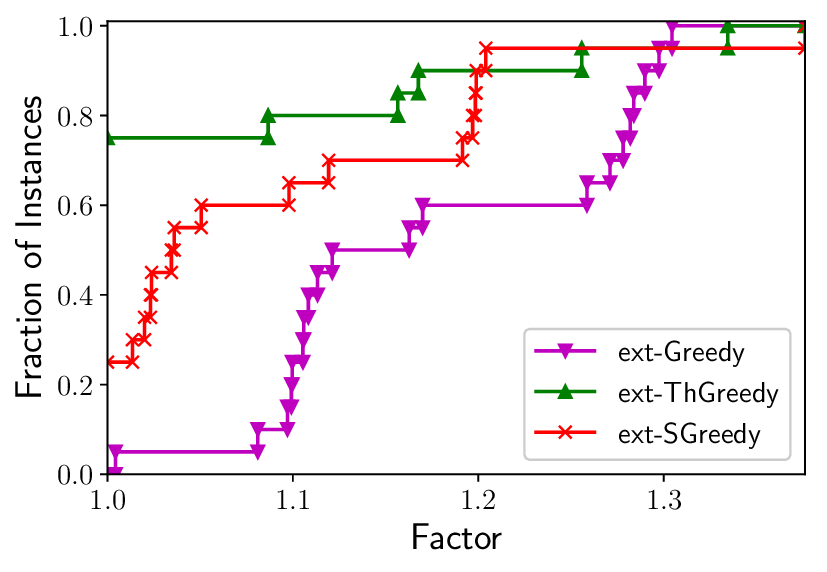}  \\
(d) & (e) & (f) \\
\includegraphics[width=0.3\textwidth]{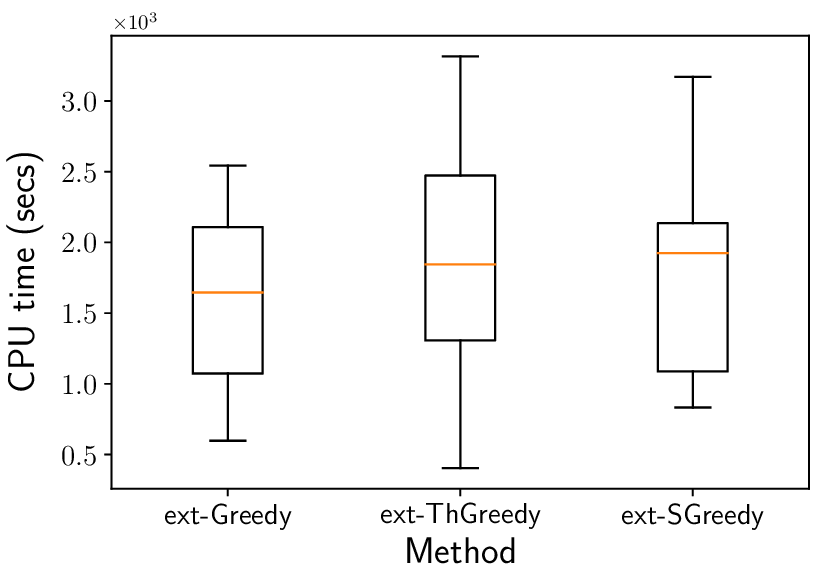} &
\includegraphics[width=0.3\textwidth]{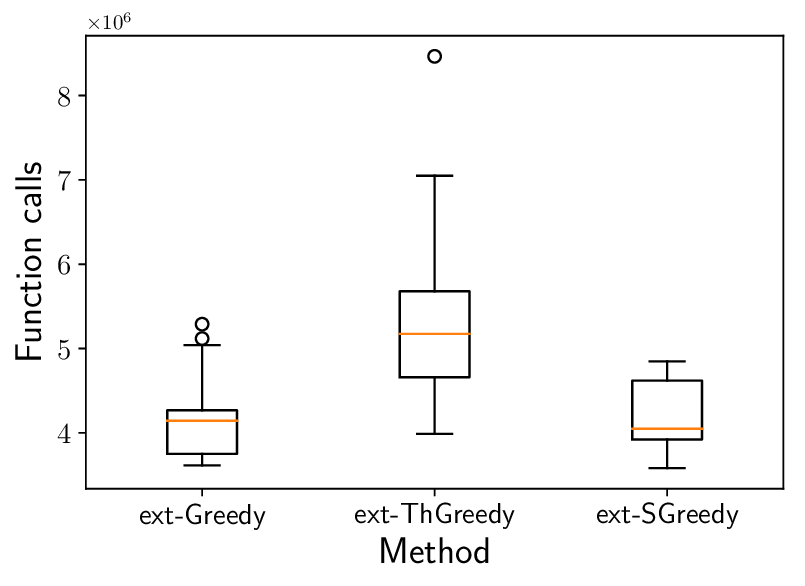} & \\
 (g) & (h) &
 \end{tabular}}
\caption{Experimental Results. Non-parametric learning: performance profiles (a) for running time ($x$-axis is in log-scale; note that \texttt{ext-Greedy} is covered by \texttt{ext-ThGreedy}) and (b) for function calls. In (c) is the objective value versus the violation ratio in a single run of each method. In (d) is the box-plot for the function calls. \emph{Clustering}: (\texttt{small}) performance profiles (e) for the running time and (f) for the function calls. (\texttt{Large}) box-plots (g) for the running time and (h) for the function calls.}
\label{fig:experiments}
\end{figure}

\subsubsection{Exemplar-based Clustering.}
We follow the setup in \citep{mirzasoleiman_etal15}. Solving the \(k\)-medoid problem is a common way to select a subset of exemplars that represent a large dataset $V$ \citep{kaufman_09}. This is done by minimizing the sum of pairwise dissimilarities between elements in $A\subseteq V$ and $V$. Formally, define \(L(A) =\frac{1}{V}\sum_{e\in V} \min_{v\in A}d(e,v)\), where \(d: V\times V\to\RR_+\) is a distance function that represents the dissimilarity between a pair of elements. By introducing an appropriate auxiliary element $e_0$, it is possible to define a new objective \(f(A) := L(\{e_0\}) - L(A+e_0)\) that is monotone and submodular \citep{gomes_krause10}, thus maximizing \(f\) is equivalent to minimizing $L$.
In our experiment, we use the VOC2012 dataset \citep{voc_2012}. The ground set $V$ corresponds to images, and we want to select a subset of the images that best represents the dataset. Each image has several (possible repeated) associated categories such as person, plane, etc. There are around 20 categories in total. Therefore, images are represented by feature vectors obtained by counting the number of elements that belong to each category, for example, if an image has 2 people and one plane, then its feature vector is $(2,1,0,\ldots,0)$ (where zeros correspond to other elements). We choose the Euclidean distance $d(e,e') = \|x_e-x_{e'}\|$ where $x_e,x_{e'}$ are the feature vectors for images $e,e'$. We normalize the feature vectors to mean zero and unit norm, and we choose $e_0$ as the origin. For our robust criteria, we consider $k=20$ perturbations of the function $f$ defined above, i.e., Problem \eqref{eq:offline_def} corresponds to \(\max_{A\in \I}\min_{i\in[20]}f(A) + \sum_{e\in A\cap \Lambda_i}\eta_e\), where \(\Lambda_i \) is a random set of fixed size with different composition for each $i\in[20]$, and finally, $\eta\sim[0,1]^V$ is a uniform error vector.

We consider two experiments: (\texttt{small}) with $n=3,000$ images, 20 random instances considering $q=6$ and $b=70$, $|\Lambda_i|=500$ and (\texttt{large})  with $n=17,125$ images, 20 random instances $q\in\{10,\ldots,29\}$ parts and budget $b=5$, $|\Lambda_i|=3,000$ (we do not implement \texttt{prev-extG} because of the exorbitant running time). We report the results of the experiments in Figures \ref{fig:experiments} (e)-(h). In Figures \ref{fig:experiments}  (e) and (f), we show the performance profiles for the running time and the number of function calls of all methods, respectively. The y-axis corresponds to the fraction of the instances in which a specific method performs less than a multiplicative factor (x-axis) with respect to the best method. For example, we observe in (f) that in only 20\% of the instances the method \texttt{ext-Greedy} uses less than 1.1 times the number of function calls used by the best method (\texttt{ext-ThGreedy}). We do not present the results of \texttt{prev-extG} due to its difference in magnitude. For \texttt{small}, performance profiles (e) and (f) confirm our theoretical results: \texttt{ext-ThGreedy} is the most likely to use less function calls (f) and running time (e) when the rank is relatively high (in this case $q\cdot b =420$). This contrasts with the performance of \texttt{ext-Greedy} that depends on the rank (chart (f) reflects this). For \texttt{large}, we can see in box-plots (g) and (h) that the results are similar, either in terms of running time or function evaluations. Therefore, when we face a large ground set and a small rank we could choose any algorithm, but we would still prefer \texttt{ext-ThGreedy}, since it has no dependency on the rank.

\section{The Online Case}\label{sec:online_problem}
In this section, we consider the online robust optimization problem (Equation~\ref{eq:regret}) under matroid constraints. We introduce an online bi-criteria algorithm that achieves a sublinear $(1-\epsilon)$-regret while using solution $S^t$ at time $t$ that is a union of $O(\ln\frac{1}{\epsilon})$ independent sets from $\I$.
To start, let us first present definitions and known results that play a key role in this online optimization problem. To avoid any confusion, in the remainder of the section we will denote the dot product between two vectors as $\langle\cdot,\cdot\rangle$.

\subsection{Background}\label{sec:online_background}
\paragraph{Submodular maximization.} Multilinear extension plays a crucial role in designing approximation algorithms for various constrained submodular optimization problems (see Section \ref{sec:cont_greedy} for a list of its useful properties). Vondr\'ak~\citep{vondrak2008optimal} introduced the  \emph{discretized continuous greedy} algorithm that achieves a $1-1/e$ approximate solution for maximizing a single monotone submodular function under matroid constraints (see \citep{feldman2011unified} for the variant of the continuous greedy that we use).  Consider $G$ the multilinear extension of a monotone submodular function $g$. Recall that $\Delta G(y)$ denotes the vector whose $e^{th}$-coordinate is $\Delta_e G(y)$ as defined in \eqref{eq:delta_ML}. At a high level, the discretized continuous greedy algorithm discretizes interval $[0,1]$ into points $\{0,\delta,2\delta,\ldots,1\}$. Starting at $y_0=0$,  for each $\tau\in \{\delta,2\delta,\ldots,1\}$ the algorithm uses an LP to compute the direction $z_\tau = \argmax_{z\in\P(\M)}\langle\Delta G(y_{\tau-\delta}), z\rangle$. Then the algorithm takes a step in the direction of $z_\tau$ by setting
$y_{\tau, e} \gets y_{\tau-\delta, e} + \delta z_{\tau,e} (1 - y_{\tau-\delta, e})$ for  all $e\in V$. Finally, it outputs a set $S$ by rounding the fractional solution $y_1$. We will use this discretized version of the continuous greedy to construct our online algorithm in the following section.

\paragraph{The soft-min function.} Consider a set of $k$ twice differentiable, real-valued functions $\phi_1,\ldots, \phi_k:\RR^n\to\RR$. Let $\phi_{min}$ be the minimum among these functions, i.e., for each point $x$ in the domain, define $\phi_{min}(x):=\min_{i\in[k]} \phi_i(x)$. This function can be approximated by using the so-called {\it soft-min} (or {\it log-sum-exp}) function $H:\RR^n\to\RR$ defined as
\[H(x)=-\frac{1}{\alpha}\ln \sum_{i\in[k]}e^{-\alpha \phi_i(x)},\]
where $\alpha>0$ is a fixed parameter. We now present some of the
key properties of this function in the following lemma.
\begin{lemma}\label{lem:prop_softmin}
For any set of $k$ twice differentiable, real-valued functions $\phi_1,\ldots, \phi_k$, the soft-min function $H$ satisfies the following properties:
\begin{enumerate}
\item Bounds:
\begin{equation}\label{eq:bounds}
\phi_{min}(x)-\frac{\ln k}{\alpha}\leq H(x)\leq \phi_{min}(x).
\end{equation}
\item Gradient: \begin{equation}\nabla H(x)=\sum_{i\in[k]}p_i(x)\nabla
    \phi_i(x),\end{equation} where
  $p_i(x):=e^{-\alpha \phi_i(x)}/\sum_{j\in[k]}e^{-\alpha
    \phi_j(x)}$. Clearly, if $\nabla \phi_i\geq0$ for all $i\in[k]$, then
  $\nabla H\geq 0$.
\item Hessian:
\begin{align}\label{eq:bound_hessianH}
\frac{\partial^2H(x)}{\partial x_{e_1}\partial x_{e_2}}&=\sum_{i\in[k]}p_i(x)\left(-\alpha\frac{\partial \phi_i(x)}{\partial x_{e_1}} \frac{\partial \phi_i(x)}{\partial x_{e_2}}+\frac{\partial^2\phi_i(x)}{\partial x_{e_1}\partial x_{e_2}}\right)+\alpha\nabla_{e_1} H(x) \nabla_{e_2}H(x) 
\end{align}
Moreover, if for all $i\in [k]$ we have $\left|\frac{\partial \phi_i}{\partial x_{e_1}}\right|\leq L_1$, and $\left|\frac{\partial^2 \phi_i}{\partial x_{e_1}\partial x_{e_2}}\right|\leq L_2$, then $\left|\frac{\partial^2 H}{\partial x_{e_1}\partial x_{e_2}}\right|\leq 2\alpha L_1^2 +L_2.$
\item Comparing the average of the $\phi_i$ functions with $H$: given $\alpha>0$ we have
\begin{equation}\label{eq:average_H}
H(x)\leq\sum_{i\in [k]}p_i(x)\phi_i(x)\leq H(x)+\frac{\ln \alpha}{\alpha}+\frac{\ln k}{\alpha}+\frac{k}{\alpha}.
\end{equation}
Therefore, for $\alpha>0$ sufficiently large $\sum_{i\in [k]}p_i(x)\phi_i(x)$ is a good approximation of $H(x)$.
\end{enumerate}
\end{lemma}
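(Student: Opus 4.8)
\textbf{Proof proposal for Lemma~\ref{lem:prop_softmin}.}

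The plan is to establish the four properties by direct computation, treating each part in sequence and reusing the outputs of earlier parts in later ones. For part~(1), the bounds, I would exploit monotonicity of $-\frac{1}{\alpha}\ln(\cdot)$: since $e^{-\alpha g_{min}(x)} \le \sum_{i\in[k]} e^{-\alpha g_i(x)} \le k\,e^{-\alpha g_{min}(x)}$, applying $-\frac1\alpha\ln$ (which reverses inequalities) gives $g_{min}(x) - \frac{\ln k}{\alpha} \le H(x) \le g_{min}(x)$ immediately. For part~(2), I would differentiate $H$ directly: writing $Z(x) = \sum_j e^{-\alpha g_j(x)}$, we have $\nabla H = -\frac{1}{\alpha}\frac{\nabla Z}{Z} = -\frac{1}{\alpha}\frac{\sum_i (-\alpha \nabla g_i) e^{-\alpha g_i}}{Z} = \sum_i p_i \nabla g_i$ with $p_i = e^{-\alpha g_i}/Z$; the nonnegativity remark follows since the $p_i$ form a probability vector.

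For part~(3), the Hessian, I would differentiate the gradient expression $\nabla_{e_1} H = \sum_i p_i \,\partial_{e_1} g_i$ with respect to $x_{e_2}$. This requires $\partial_{e_2} p_i$, which is a routine quotient-rule computation: $\partial_{e_2} p_i = p_i\big(-\alpha \partial_{e_2} g_i + \alpha \sum_j p_j \partial_{e_2} g_j\big) = p_i\big(-\alpha \partial_{e_2} g_i + \alpha \nabla_{e_2} H\big)$. Substituting and collecting terms produces the stated formula: the $\sum_i p_i(-\alpha \partial_{e_1}g_i \partial_{e_2}g_i + \partial^2_{e_1 e_2} g_i)$ piece comes from the product rule on $p_i$ and on $\partial_{e_1} g_i$, and the $+\alpha \nabla_{e_1} H \nabla_{e_2} H$ correction term comes from the $\alpha \nabla_{e_2} H$ part of $\partial_{e_2} p_i$ times $\sum_i p_i \partial_{e_1} g_i = \nabla_{e_1} H$. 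For the bound, using $|\partial_{e_1} g_i| \le L_1$, $|\partial^2_{e_1 e_2} g_i| \le L_2$, $0 \le p_i \le 1$, $\sum_i p_i = 1$, and $|\nabla_{e_\cdot} H| = |\sum_i p_i \partial_{e_\cdot} g_i| \le L_1$: the first bracket is bounded by $\alpha L_1^2 + L_2$ and the correction term by $\alpha L_1^2$, giving the total $2\alpha L_1^2 + L_2$.

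Part~(4) is the one I expect to be the main obstacle, since the stated inequality involves the somewhat mysterious additive term $\frac{n+\ln T}{\alpha} + \frac{\ln k}{\alpha} + \frac{k e^{-n}}{T}$ that does not arise from a purely formal manipulation. The lower bound $H(x) \le \sum_i p_i g_i(x)$ should follow from convexity: $-\frac1\alpha \ln(\cdot)$ applied to $Z$ compared against the weighted average, i.e. this is essentially Jensen / the fact that $H$ is a ``$\log$-sum-exp'' whose value lies below the $p$-weighted average of the $g_i$ (one can check $\sum_i p_i g_i - H = \frac1\alpha \sum_i p_i \ln(p_i k)+\frac{\ln k}{\alpha} \cdots$ — more precisely $\sum_i p_i g_i - H = \frac{1}{\alpha} D(p \,\|\, \text{uniform-like weights})\ge 0$, the nonnegativity of a KL-type quantity). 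For the upper bound, the idea is to split the index set by whether $g_i(x) - g_{min}(x)$ is large: for indices with small gap, $p_i g_i$ is close to $p_i g_{min}$, contributing about $g_{min} \approx H$ up to $\frac{\ln k}{\alpha}$; for indices with gap at least roughly $\frac{n+\ln T}{\alpha}$, the weight $p_i$ is at most $e^{-(n+\ln T)} = e^{-n}/T$, so even if $g_i$ is as large as possible the total contribution of all such terms is at most $\frac{k e^{-n}}{T}$ (assuming the $g_i$ are bounded, which holds in our application where they are multilinear extensions of $[0,1]$-valued functions, hence bounded by something like $1$ or $n$). Assembling these two regimes yields the claimed bound. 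I would be careful about exactly which normalization of the $g_i$'s is being used so that the constant $n$ in the exponent is justified; this is the only genuinely delicate accounting in the lemma, and I would present it as a short case split rather than a slick one-line argument.
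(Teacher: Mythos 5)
Your proposal is correct and follows essentially the same route as the paper: part (1) via the same two-sided bound on $\sum_i e^{-\alpha g_i}$, parts (2)--(3) by the direct differentiation the paper omits as routine, and part (4) by the same split into indices with gap below or above $(n+\ln T)/\alpha$, yielding the $\frac{ke^{-n}}{T}$ and $\frac{n+\ln T}{\alpha}+\frac{\ln k}{\alpha}$ terms. Your caveat about boundedness of the $g_i$ in part (4) is well taken --- the paper's proof implicitly uses $g_i\leq 1$ (as holds for the multilinear extensions in the application) --- and your entropy-based justification of the lower bound is a harmless variant of the paper's simpler observation that $\sum_i p_i g_i\geq g_{min}\geq H$.
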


For other properties and applications, we refer the interested reader to \citep{calafiore_ghaoui14}. Now, we present a lemma which is used to prove the main result in the online case, Theorem \ref{theorem:online}. 

\begin{lemma}\label{lemma:Taylor_bound_H}
Fix a parameter $\delta>0$. Consider $T$ collections of $k$ twice-differentiable functions, namely $\{\phi_i^1\}_{i\in[k]},\ldots, \{\phi_i^T\}_{i\in[k]}$. Assume $0\leq \phi_i^t(x)\leq 1$ for any $x$ in the domain, for all $t\in[T]$ and $i\in[k]$. Define the corresponding sequence of soft-min functions $H^1,\ldots, H^T$,  with common parameter $\alpha>0$. Then, any two sequences of points $\{x^t\}_{t\in[T]},\{y^t\}_{t\in[T]}\subseteq[0,1]^V$ with $|x^t-y^t|\leq \delta$ satisfy
\[\sum_{t\in[T]}H^t(y^t)-\sum_{t\in[T]}H^t(x^t)\geq\sum_{t\in[T]}\langle\nabla H^t(x^t), y^t-x^t\rangle- O\big(Tn^2\delta^2\alpha\big).\]
\end{lemma}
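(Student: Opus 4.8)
The plan is to prove this by a second-order Taylor expansion of each $H^t$ around the point $x^t$, then sum over $t$ and control the error uniformly. Concretely, for a fixed $t$ write $y^t = x^t + \Delta^t$ where $\|\Delta^t\|_\infty \le \delta$, and apply Taylor's theorem with Lagrange remainder: there is a point $\xi^t$ on the segment $[x^t, y^t]$ such that
\[
H^t(y^t) - H^t(x^t) = \nabla H^t(x^t)\cdot \Delta^t + \tfrac12 (\Delta^t)^\top \nabla^2 H^t(\xi^t)\, \Delta^t.
\]
The first term is exactly what appears on the right-hand side of the claimed inequality (summed over $e$ and $t$), so everything reduces to lower-bounding the quadratic remainder term by $-O(Tn^3\delta^2\alpha)$ after summing.

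The key input is the Hessian bound from Lemma~\ref{lem:prop_softmin}, part 3: since each $g_i^t$ has values in $[0,1]$, I need bounds $L_1$ on $|\partial g_i^t/\partial x_{e_1}|$ and $L_2$ on $|\partial^2 g_i^t/\partial x_{e_1}\partial x_{e_2}|$ — these come from the fact that in our application the $g_i^t$ are multilinear extensions of submodular functions with range $[0,1]$, so all first and second partials are bounded by a constant (in fact by $1$); I will note $L_1, L_2 = O(1)$. Then part 3 of Lemma~\ref{lem:prop_softmin} gives $|\partial^2 H^t/\partial x_{e_1}\partial x_{e_2}| \le 2\alpha L_1^2 + L_2 = O(\alpha)$ uniformly, at every point including $\xi^t$. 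Therefore
\[
\left| (\Delta^t)^\top \nabla^2 H^t(\xi^t)\, \Delta^t \right| \le \sum_{e_1, e_2 \in V} |\Delta^t_{e_1}|\,|\Delta^t_{e_2}|\, \left|\frac{\partial^2 H^t(\xi^t)}{\partial x_{e_1}\partial x_{e_2}}\right| \le n^2 \cdot \delta^2 \cdot O(\alpha) = O(n^2 \delta^2 \alpha).
\]
Summing the remainder over $t\in[T]$ yields an error of $O(Tn^2\delta^2\alpha)$, which is absorbed into the stated $O(Tn^3\delta^2\alpha)$ bound (the extra factor of $n$ gives slack and matches the paper's convention). Rearranging the Taylor identity into an inequality — keeping the gradient term exactly and dropping the (now bounded) remainder — gives precisely the claimed statement.

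I do not expect any serious obstacle here; the only points requiring care are (i) justifying that the relevant derivative bounds $L_1, L_2$ are $O(1)$, which follows from the structure of multilinear extensions of $[0,1]$-valued submodular functions and should be stated explicitly, and (ii) making sure the Lagrange remainder form of Taylor's theorem applies, which needs $H^t$ to be twice continuously differentiable — this holds because each $g_i^t$ is twice differentiable by hypothesis and the soft-min is a smooth ($C^\infty$) function of the $g_i^t$, so the composition is $C^2$. A minor bookkeeping point is that Taylor's theorem with Lagrange remainder in several variables is cleanest stated via the one-dimensional restriction $\phi(s) = H^t(x^t + s\Delta^t)$ on $s\in[0,1]$, for which $\phi''(s) = (\Delta^t)^\top \nabla^2 H^t(x^t + s\Delta^t)\Delta^t$; I would phrase it that way. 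Once these are in place the result follows immediately by summation.
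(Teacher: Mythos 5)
Your proof is correct and is essentially the paper's own argument: a second-order Taylor expansion with Lagrange remainder, with the quadratic term controlled by the soft-min Hessian bound from part 3 of Lemma~\ref{lem:prop_softmin} (the paper merely packages the $T$ expansions into a single stacked function $\sum_{t}H^t(x^t)$ with block-diagonal Hessian, which is equivalent to your per-$t$ expansion followed by summation). Your explicit remark that the $O(1)$ bounds on $L_1,L_2$ come from the $g_i^t$ being multilinear extensions of $[0,1]$-valued functions, rather than from the lemma's stated hypotheses alone, is a point the paper glosses over, and including it is an improvement rather than a deviation.
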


For a proof of these lemmas, we refer to the Appendix. 

\subsection{Online Algorithm and Analysis}\label{sec:online_analysis}

Turning our attention to the online robust optimization problem \eqref{eq:regret}, we are immediately faced with two challenges. First, we need  to find a direction $z_t$ that is good for  all $k$ submodular functions in an online fashion. In the offline case, we used the function $g(S)=\frac{1}{k}\cdot\sum_{i=1}^k\min\{f_i(S),\gamma\}$ and its multilinear extension to find such a direction (see Section \ref{sec:cont_greedy}). To resolve this issue, we use a \emph{soft-min} function that converts robust optimization over $k$ functions into optimizing of a single function.
Secondly, robust optimization leads to non-convex and non-smooth optimization combined with online arrival of such submodular functions. To deal with this, we use the  Follow-the-Perturbed-Leader (FPL)  online  algorithm introduced by \citet{kalai_etal05}.

\nhdelete{For any $i\in[k]$ and $t\in[T]$, we denote by $F_i^t$ the multilinear extension of $f_i^t$. Also, similarly as in \eqref{eq:delta_def} we define $\Delta_e F_i^t(y)$ for all $e\in V$.}
For any collection of monotone submodular functions $\{f_i^t\}_{i\in [k]}$ played by the adversary, we define the \emph{soft-min} function with respect to the corresponding multilinear extensions $\{F_i^t\}_{i\in [k]}$ as 
\(H^t(y) := -\frac{1}{\alpha}\ln \sum_{i\in[k]}e^{-\alpha F_i^t(y)},\)
where $\alpha>0$ is a suitable parameter. Recall we assume functions $f_i^t$ taking values in $[0,1]$, then their multilinear extensions $F_i^t$ also take values in $[0,1]$. The following properties of the soft-min function as defined in the previous section are easy to verify and crucial for our result. 
\vspace{1em}
\begin{enumerate}
\item Approximation:
\(\min_{i\in[k]} F_i^t(y) - \dfrac{\ln k}{\alpha} \leq H^t(y) \leq \min_{i\in[k]} F_i^t(y).\)
\vspace{1em}
\item Gradient: \(\nabla H^t(y) = \sum_{i\in[k]} p^t_i(y)\nabla F_i^t(y),\)
where $ p_i^t(y)\propto e^{-\alpha F_i^t(y)}\text{ for all } i\in[k]$.
\vspace{1em}
\end{enumerate}
Note that as $\alpha$ increases, the soft-min function $H^t$ becomes a better approximation of  $\min_{i\in[k]}\{F_i^t\}$, however, its smoothness degrades (see Property \eqref{eq:bound_hessianH} in Section \ref{sec:online_background}). 
On the other hand, the second property shows that the gradient of the soft-min function is a convex combination of the gradients of the multilinear extensions, which allows us to optimize all the functions at the same time. Indeed, define $\Delta_e H^t(y):= \sum_{i\in[k]} p^t_i(y)\Delta_e F_i^t(y)=(1-y_e)\nabla_e H^t(y).$
At each stage $t\in [T]$, we use the information from the gradients previously observed, in particular, $\{\Delta H^1,\cdots,\Delta H^{t-1}\}$ to decide the set $S^t$. 
To deal with adversarial input functions, we use the FPL algorithm~\citep{kalai_etal05} and the following guarantee about the algorithm.
\begin{theorem}[\citet{kalai_etal05}]\label{theo:regret_FPL2}
  Let $s_1,\ldots, s_T\in\mathcal{S}$ be a sequence of rewards. The FPL algorithm 6 (in the Appendix) with parameter $\eta\leq 1$ outputs decisions $d_1,\ldots,d_T$ with
  regret
\begin{equation*}
\max_{d\in\D}\sum_{t\in[T]}\langle s_t, d\rangle -  \EE\left[\sum_{t\in[T]}  \langle s_t, d_t\rangle\right] \leq O\left(\text{poly}(n)\left(\eta T+\frac{1}{T\eta}\right)\right).
\end{equation*}
\end{theorem}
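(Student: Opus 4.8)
The final statement is Theorem~\ref{theo:regret_FPL2}, which is quoted directly from \citep{kalai_etal05}. Since this is cited as a known result rather than proved in the paper, my plan is to sketch how the standard FPL analysis establishes it, focusing on the structure of the argument rather than reproving it from scratch.

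\textbf{Overall approach.} The plan is to recall the Follow-the-Perturbed-Leader algorithm: at each stage $t$, sample a random perturbation vector $\nu \in \RR^n$ (typically each coordinate i.i.d.\ from an exponential distribution with parameter $\eta$, or uniform from a box of side $1/\eta$), and play the decision $d_t \in \D$ maximizing $\left(\sum_{\tau < t} s_\tau + \nu\right)\cdot d$. The analysis proceeds in two conceptual steps. First, one analyzes the \emph{hypothetical} ``be-the-leader'' algorithm that at time $t$ gets to use $s_t$ as well, i.e.\ plays the leader for $\sum_{\tau \le t} s_\tau + \nu$; a standard telescoping/induction argument shows this hypothetical algorithm has regret bounded by the ``diameter'' term $\EE[\nu \cdot (d^* - d_1)] = O(\poly(n)/\eta)$ coming from the single perturbation. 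Second, one bounds the per-step difference between the distribution of the leader at time $t$ and at time $t+1$: because the cumulative reward vectors differ only by the single bounded vector $s_t$, the perturbation of magnitude $\sim 1/\eta$ makes the two argmax distributions close in total variation, incurring an extra loss of $O(\poly(n)\,\eta)$ per step, hence $O(\poly(n)\,\eta T)$ in total.

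\textbf{Key steps in order.} (1) State the perturbation distribution and the FPL update rule, referring to Algorithm~\ref{alg:FPL} in Appendix~\ref{sec:FPL}. (2) Prove the ``be-the-leader'' lemma by induction on $T$: the leader sequence for the cumulative sums (including the current reward) satisfies $\sum_t s_t \cdot \mathrm{leader}_t \ge \max_{d} \sum_t s_t \cdot d - \nu\cdot(d^* - \mathrm{leader}_1)$, so the perturbed be-the-leader loses only the diameter term $O(\poly(n)/\eta)$ in expectation. (3) Bound the step-wise stability: show $\EE[s_t \cdot d_t] \ge \EE[s_t \cdot \mathrm{leader}_t] - O(\poly(n)\eta)$, using that shifting the objective vector by the bounded vector $s_t$ changes the argmax only on an event of probability $O(\poly(n)\eta)$ under the perturbation, and rewards are bounded. (4) Combine: $\mathrm{Regret} \le O(\poly(n)/\eta) + O(\poly(n)\eta T)$, which after relabeling (the paper writes $\eta T + \tfrac{1}{T\eta}$, a cosmetic rescaling of $\eta$) gives the claimed $O\!\left(\poly(n)\left(\eta T + \frac{1}{T\eta}\right)\right)$ bound for any $\eta \le 1$.

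\textbf{Main obstacle.} The delicate point is step~(3), the stability estimate: one must argue that the random argmax over the discrete (but exponentially large) decision set $\D$ is insensitive to a bounded perturbation of the linear objective. The clean way is a coupling / lazy-leader argument — compare the leader under perturbation $\nu$ at time $t$ with the leader under perturbation $\nu + s_t$ (both have the same distribution up to the density ratio of $\nu$ vs.\ $\nu + s_t$), and bound the total-variation distance between these two shifted exponential distributions by $O(\|s_t\|_1 \eta) = O(\poly(n)\eta)$. Since $\{s_t\}$ and $\D$ are bounded by a polynomial in $n$, this yields the per-step loss, and everything else is bookkeeping. As this is precisely the content of \citep[Theorem 1.1]{kalai_etal05}, in the paper we simply invoke it; the sketch above is the reason it holds.
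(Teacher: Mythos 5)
Your proposal is correct and takes essentially the same route as the paper: the paper likewise treats this as a quoted result, simply invoking \citep{kalai_etal05} and restating the bound in Appendix~\ref{sec:FPL} as $\text{\bf Regret}(T)\leq \eta LAT+\frac{D}{\eta}$ (Theorem~\ref{theo:regret_FPL}), and your be-the-leader plus perturbation-stability sketch is exactly the standard argument behind that bound. One small caveat: the $\frac{1}{T\eta}$ in the main-text statement cannot be obtained from $\eta T+\frac{1}{\eta}$ by rescaling $\eta$ as you suggest; it is best read as a loose restatement (the paper's own proof of Theorem~\ref{theorem:online} uses $R_\eta=\eta LAT+\frac{D}{\eta}$, i.e., precisely the form you derived).
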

For completeness, we include the original setup and the algorithm in the Appendix. 

Our online algorithm works as follows: first, given $0<\epsilon<1$ we denote $\ell := \lceil \ln \frac{1}{\epsilon} \rceil$. We consider the following discretization indexed by $\tau\in\{0,\delta,2\delta,\ldots,\ell\}$ and construct fractional solutions $y^t_{\tau}$ for each iteration $t$ and discretization index $\tau$. At each iteration $t$, ideally we would like to construct $\{y^t_{\tau}\}_{\tau=0}^\ell$ by running the continuous greedy algorithm using the soft-min function $H^t$ and then play $S^t$ using these fractional solutions. But in the online model, function $H^t$ is revealed only after playing set $S^t$. To remedy this, we aim to construct $y^t_{\tau}$ using FPL algorithm based on gradients $\{\nabla H^j\}_{j=1}^{t-1}$ obtained from previous iterations. Thus we have multiple FPL instances, one for each discretization parameter, being run by the algorithm.
Finally, at the end of iteration $t$, we have a fractional vector $y^t_\ell$ which belongs to $\ell \cdot \P(\M)\cap [0,1]^V$ and therefore can be written, fractionally, as a union of $\ell$ independent sets using the matroid union theorem~\citep{schrijver2003combinatorial}.

We round the fractional solution $y_{\ell}^t$ using the randomized swap rounding (or randomized pipage rounding) proposed by \citet{chekuri2010dependent} for matroid $\M_{\ell}$ to obtain the set $S^t$ to be played at time $t$. The following theorem from \citep{chekuri2010dependent} gives the necessary property of the randomized swap rounding that we use.
\begin{theorem}[Theorem II.1, \citep{chekuri2010dependent}]\label{thm:swap-rounding2}
Let $f$ be a monotone submodular function and $F$ be its multilinear extension. Let $ x\in \P(\M')$ be a point in the polytope of matroid $\M'$ and $S'$ a random
independent set obtained from it by randomized swap rounding. Then, $\EE[f(S')]\geq F(x)$.
\end{theorem}

Below, we formalize the details in Algorithm \ref{alg:OnlineSoftMin} (observe that $\ell/\delta \in \ZZ_+$). 
\begin{algorithm}[h!]
\caption{OnlineSoftMin algorithm}\label{alg:OnlineSoftMin}
\begin{algorithmic}[1]
\Require  learning parameter $\eta>0$, $\epsilon>0$, $\alpha= nT$, discretization $\delta= n^{-1}T^{-1}$, and $\ell = \lceil  \ln \frac{1}{\epsilon}\rceil$.
\Ensure sequence of sets $S_1,\ldots, S_T$.
\State Sample $q\sim[0,1/\eta]^V$
\For {$t=1$ to $T$}
\State $y_0^t=0$
\For {$\tau \in \{\delta,2\delta,\ldots, \ell \}$}
	 \State {\bf Compute} 	 \( z^t_\tau = \argmax_{z\in \P(\M)}\left\langle\sum_{j=1}^{t-1} \Delta H^{j}(y^j_{\tau-\delta})+q, z\right\rangle.\)
	\State {\bf Update} For each $e\in V$,  
	\(
	y_{\tau,e}^t=y_{\tau-\delta,e}^t+\delta (1-y_{\tau-\delta,e}^t) z_{\tau,e}^t.\)
\EndFor
\State{\bf Play} $S^t \leftarrow \ \text{Rounding}\left(y_{\ell}^t\right)$. {\bf Receive} and {\bf observe} new collection $\{f_i^t\}_{i\in[k]}$.
\EndFor
\end{algorithmic}
\end{algorithm}
In order to get sub-linear regret for the FPL algorithm 6, \cite{kalai_etal05} assume a couple of conditions on the problem (see Appendix). Similarly, for our online model we need to consider the following for any $t\in[T]$:
\begin{enumerate}
\item bounded diameter of $\P(\M)$, i.e., for all $z,z'\in \P(\M)$, $\| z-z'\|_1\leq D$;
\item for all $y,z\in \P(\M)$, we require $\left|\langle z,\Delta H^t(y)\rangle\right|\leq L$;
\item for all $y\in \P(\M)$, we require $\|\Delta H^t(y)\|_1\leq A$,
\end{enumerate}

Now, we give a complete proof of Theorem \ref{theorem:online} for any given learning parameter $\eta>0$, but the final result follows with $\eta= \sqrt{D/LAT}$ and assuming $L\leq n$, $A\leq n$ and $D\leq \sqrt{n}$, which gives a $O(n^{5/4})$ dependency on the dimension in the regret.

\begin{proof}[Proof of Theorem \ref{theorem:online}.]
Consider the sequence of multilinear extensions
  $\{F_i^1\}_{i\in[k]}, \ldots, \{F_i^T\}_{i\in[k]}$ derived from the
  monotone submodular functions $f_i^t$ obtained during the dynamic
  process. Since $f_i^t$'s have value in $[0,1]$, we have $0\leq F_i^t(y)\leq1$ for any $y\in[0,1]^V$ and $i \in [k]$.
  Consider the corresponding soft-min functions $H^t$ for collection $\{F_i^t\}_{i\in[k]}$ with $\alpha = nT$ for all $t\in[T]$. Denote $\ell = \lceil \ln \frac{1}{\epsilon} \rceil$ and fix
  $\tau\in \{\delta,2\delta,\ldots, \ell\}$ with
  $\delta=n^{-1}T^{-1}$. According to the update in Algorithm
  \ref{alg:OnlineSoftMin}, $\{y^t_{\tau}\}_{t\in[T]}$ and
  $\{y^t_{\tau-\delta}\}_{t\in[T]}$ satisfy conditions of Lemma
  \ref{lemma:Taylor_bound_H}. Thus, we obtain
\begin{equation*}
\sum_{t\in[T]}H^t(y^t_\tau)-H^t(y^t_{\tau-\delta})\geq \sum_{t\in[T]}\left\langle\nabla H^t(y^t_{\tau-\delta}), y^t_\tau-y^t_{\tau-\delta}\right\rangle - O\Big(Tn^2\delta^2\alpha\Big).
\end{equation*}

Then, since the update is $y_{\tau,e}^t=y_{\tau-\delta,e}^t+\delta (1-y_{\tau-\delta,e}^t) z_{\tau,e}^t$, we get
\begin{align}
\sum_{t\in[T]}H^t(y^t_\tau)-H^t(y^t_{\tau-\delta})\notag&\geq \delta\sum_{t\in[T]}\sum_{e\in V} \nabla_e H^t(y^t_{\tau-\delta})(1-y_{\tau-\delta,e}^t) z_{\tau,e}^t - O\Big(Tn^2\delta^2\alpha\Big) \notag\\
&= \delta\sum_{t\in[T]} \left\langle\Delta H^t(y^t_{\tau-\delta}), z_{\tau}^t\right\rangle- O\Big(Tn^2\delta^2\alpha\Big). \label{eq:update2}
\end{align}

Observe that an FPL algorithm is implemented for each $\tau$, so we can state a regret bound for each $\tau$ by using Theorem \ref{theo:regret_FPL2}. Specifically,
\begin{equation*}
\EE\left[\sum_{t\in[T]}\left\langle\Delta H^t(y^t_{\tau-\delta}), z^t_\tau\right\rangle\right]\geq\max_{z\in \P(\M)} \EE\left[\sum_{t\in[T]}\left\langle\Delta H^t(y^t_{\tau-\delta}), z\right\rangle\right] -\text{ $R_\eta$},
\end{equation*}
where $R_\eta=\eta LAT+\frac{D}{\eta}$ is the regret guarantee for a given $\eta>0$. By taking expectation in \eqref{eq:update2} and using the regret bound we just mentioned, we obtain
\begin{align}\label{eq:ineqH}
\EE\Bigg[\sum_{t\in[T]}H^t(y^t_\tau)-H^t(y^t_{\tau-\delta})\Bigg]&\geq \delta\left(\max_{z\in \P(\M)}\EE\left[\sum_{t\in[T]}\left\langle\Delta H^t(y^t_{\tau-\delta}), z\right\rangle\right] \right)- \delta R_\eta  - O(Tn^2\delta^2\alpha)\notag \\
&\geq \delta \EE\left(\sum_{t\in[T]} \left[H^t(x^*)-
  \sum_{i\in[k]}p^t_i(y^t_{\tau-\delta})F^t_i(y^t_{\tau-\delta})
  \right]\right) \notag \\
  & \qquad -\delta R_\eta- O(Tn^2\delta^2\alpha),
 \end{align}
 where $x^*=\one_{S^*}$ is the indicator vector of the true optimum $S^*$ for $\max_{S\in \I}\sum_{t\in[T]}\min_{i\in[k]}f_i^t(S)$. Observe that \eqref{eq:ineqH} follows from monotonicity and submodularity of each $f_i^t$, specifically we know that
 \begin{align*}
 \left\langle\Delta H^t(y), z\right\rangle=\sum_{i\in[k]}p^t_i(y) \left\langle\Delta F_i^t(y), z\right\rangle&\geq \sum_{i\in[k]}p^t_i(y) F^t_i(x^*) - \sum_{i\in[k]}p^t_i(y) F^t_i(y) \qquad  \text{(eq. \eqref{eq:grad})}\\
  &\geq  F^t_{min}(x^*) - \sum_{i\in[k]}p^t_i(y) F^t_i(y)\\
 &\geq H^t(x^*) - \sum_{i\in[k]}p^t_i(y) F^t_i(y).
 \end{align*}
 By applying property \eqref{eq:average_H} of the soft-min in expression \eqref{eq:ineqH} we get
\begin{align}\label{eq:H_iterate}
\EE \left[\sum_{t\in[T]}H^t(y^t_\tau)-H^t(y^t_{\tau-\delta})\right]\geq \notag& \delta \EE\left(\sum_{t\in[T]}H^t(x^*)- H^t(y_{\tau-\delta}^t)\right)-\delta R_\eta-  O(Tn^2\delta^2\alpha) \notag\\
&-\delta T\left(\frac{\ln \alpha}{\alpha}+\frac{\ln k}{\alpha}+\frac{k}{\alpha}\right),
 \end{align}
 Given the choice of $\alpha$ and $\delta$, the last two terms in the right-hand side of inequality \eqref{eq:H_iterate} are small compared to $R_\eta$, so by re-arranging terms we can state the following
 \begin{equation*}\sum_{t\in[T]} H^t(x^*)- \EE\left[\sum_{t\in[T]}H^t(y^t_\tau)\right] \leq (1-\delta)\left(\sum_{t\in[T]}H^t(x^*)- \EE\left[\sum_{t\in[T]}H^t(y^t_{\tau-\delta})\right]\right)+ 2\delta R_\eta\end{equation*}
 By iterating $\frac{\ell}{\delta}$ times in $\tau$, we get
 \begin{align*}
\sum_{t\in[T]}H^t(x^*)- \EE\left[\sum_{t\in[T]}H^t(y^t_{\ell})\right]&\leq (1-\delta)^{\frac{\ell}{\delta}}\left(\sum_{t\in[T]}H^t(x^*)-\sum_{t\in[T]}H^t(y^t_0)\right)+O\left((1-(1-\delta)^{\frac{\ell}{\delta}})R_\eta \right) \\
&\leq \epsilon\left[\sum_{t\in[T]}H^t(x^*)+\ln k\right]+O\left( (1-\epsilon)R_\eta\right),
 \end{align*}
where in the last inequality we used $(1-\delta)\leq e^{-\delta}$ and $\ell = \lceil  \ln \frac{1}{\epsilon}\rceil$. Given that the term $\epsilon\ln k$ is small (for $\epsilon$ sufficiently small) we can bound it by $O(R_\eta)$. Since $\alpha$ is sufficiently large, we can apply the approximation property of soft-min function to obtain the following regret bound
\begin{equation*}(1-\epsilon)\cdot\sum_{t\in[T]}\min_{i\in[k]}F_i^t(x^*)-\EE\left[\sum_{t\in[T]}\min_{i\in[k]}F_i^t\left(y^t_{\ell}\right)\right]\leq O\left((1-\epsilon)R_\eta\right).\end{equation*}
Since we are doing randomized swap rounding (or randomized pipage rounding) on each $y^t_{\ell}$, Theorem \ref{thm:swap-rounding2} shows that there is  a random set $S^t$ that is independent in $\M_\ell$ (i.e., $S^t$ is the union of at most $\ell$ independent sets in $\I$)  such that
$\EE\left[f^t_i(S^t)\right]\geq F_i^t\left(y^t_{\ell}\right)$ for all $t\in[T]$ and $i\in[k]$. Thus, we finally obtain
\begin{equation*}(1-\epsilon)\cdot \max_{S\in\I}\sum_{t\in[T]}\min_{i\in[k]}f^t_i(S)-\sum_{t\in[T]}\min_{i\in[k]}\EE\left[f^t_i(S^t)\right] \leq  O\left((1-\epsilon)R_\eta\right).\end{equation*}
\end{proof}
\begin{observation}
Theorem \ref{theorem:online} could be easily extended to an adaptive adversary by sampling in each stage $t\in[T]$ a different perturbation $q_t\sim[0,1/\eta]^V$ as shown in \citep{kalai_etal05}.
\end{observation}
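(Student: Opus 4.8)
The plan is to rerun the proof of Theorem~\ref{theorem:online} essentially verbatim, after one modification to Algorithm~\ref{alg:OnlineSoftMin}: instead of drawing a single perturbation $q\sim[0,1/\eta]^V$ once at the start, draw fresh independent perturbations $q^t_\tau\sim[0,1/\eta]^V$ at every round $t$ and every discretization level $\tau\in\{\delta,2\delta,\ldots,\ell\}$, and use $q^t_\tau$ in the $\argmax$ defining $z^t_\tau$. The only place in the current proof that exploits non-adaptivity of the adversary is the regret inequality for each of the $\ell/\delta$ parallel FPL instances, namely $\EE\big[\sum_{t}\Delta H^t(y^t_{\tau-\delta})\cdot z^t_\tau\big]\ge \max_{z\in\P(\M)}\EE\big[\sum_{t}\Delta H^t(y^t_{\tau-\delta})\cdot z\big]-R_\eta$; every other ingredient — the Taylor estimate of Lemma~\ref{lemma:Taylor_bound_H}, the soft-min identities of Lemma~\ref{lem:prop_softmin} (the gradient-as-convex-combination property and inequality \eqref{eq:grad}), the telescoping over $\tau$ in \eqref{eq:update2}--\eqref{eq:H_iterate}, the parameter choices $\alpha=n^2T^2$, $\delta=n^{-6}T^{-3}$, and the swap-rounding guarantee of Theorem~\ref{thm:swap-rounding} — holds for every fixed realization of the function sequence and is therefore insensitive to how $\{f^t_i\}_{i\in[k]}$ was produced.

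First I would set up the filtration. At round $t$ the adaptive adversary commits $\{f^t_i\}_{i\in[k]}$, hence $H^t$, as a function of the algorithm's past plays $S^1,\dots,S^{t-1}$ only, and the round-$t$ coins $\{q^t_\tau\}_\tau$ are sampled afterwards. Hence, for the FPL instance at level $\tau$, the reward vector $s^\tau_t:=\Delta H^t(y^t_{\tau-\delta})$ is measurable with respect to the $\sigma$-algebra generated by $\{f^j_i\}_{j\le t}$ together with the coins $\{q^j_{\tau'}:j<t\}$ and $\{q^t_{\tau'}:\tau'<\tau\}$ — in particular it is determined \emph{before} the fresh perturbation $q^t_\tau$ is drawn. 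This is exactly the structural condition under which Follow-the-Perturbed-Leader with per-step re-randomization retains its regret bound against an adaptive adversary \citep{kalai_etal05} (the ``be-the-leader'' comparison step is a per-round inequality that uses only independence of $q^t_\tau$ from $s^\tau_t$). I would then restate the per-level regret bound in this adaptive form and substitute it into \eqref{eq:update2}--\eqref{eq:H_iterate} unchanged, observing that the hindsight optimum $x^\ast=\argmax_{x\in\P(\M)}\sum_t\min_{i\in[k]}F^t_i(x)$ is now a random variable, which is harmless since the adaptive FPL guarantee already compares against the realized best fixed action and everything downstream is taken in expectation.

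I would then copy the rest of the computation from the proof of Theorem~\ref{theorem:online}: the geometric contraction over the $\ell/\delta$ steps in $\tau$, the passage from $H^t$ back to $\min_{i\in[k]}F^t_i$ via \eqref{eq:bounds} (valid since $\alpha$ is large), and randomized swap rounding of $y^t_\ell$ in $\M_\ell$ via Theorem~\ref{thm:swap-rounding}, yielding the same bound with the same choice $\eta=\sqrt{D/LAT}$ and the same aggregate $O(R_\eta\ln\frac{1}{\epsilon})$ regret over levels. The main point requiring genuine care, rather than bookkeeping, is the coupling among the $\ell/\delta$ parallel FPL instances: the loss sequence fed to instance $\tau$ depends on the decisions — hence the perturbations — of instances $\tau'<\tau$ \emph{within the same round}; one must verify that drawing all of $\{q^t_\tau\}_\tau$ freshly at round $t$ still leaves each $q^t_\tau$ independent of its own instance's loss $s^\tau_t$, which holds because $s^\tau_t$ involves only $q^t_{\tau'}$ with $\tau'<\tau$, and hence that summing the $\ell/\delta$ adaptive regret bounds reproduces the $O(R_\eta\ln\frac{1}{\epsilon})$ term exactly as in the oblivious proof.
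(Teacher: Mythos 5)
Your proposal is correct and follows essentially the same route the paper intends: the Observation is justified exactly by re-sampling fresh FPL perturbations so that each round's (now adaptively chosen) reward vector is independent of the perturbation used to select that round's decision, with the rest of the proof of Theorem~\ref{theorem:online} carrying over verbatim since it holds pathwise. Your refinement of drawing a fresh $q^t_\tau$ per discretization level (rather than a single $q_t$ per stage) is a sensible tightening of the same idea, as it also decouples the within-round dependence of $\Delta H^t(y^t_{\tau-\delta})$ on the lower-level perturbations.
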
 
Note that the guarantee of Theorem~\ref{theorem:online} holds with respect to the minimum of $\EE[f_i^t(S^t)]$, as opposed to the guarantee of Theorem~\ref{theorem1:offline} that directly bounds the minimum of $f_i(S)$. Because of this, the online algorithm needs only $\lceil\ln \frac{1}{\epsilon}\rceil$ independent sets, compared to the offline solution which needs $\lceil\log \frac{k}{\epsilon}\rceil$ independent sets. It might seem more appealing to define the regret with respect to the expected value of the minimum function, but at the same time it becomes technically more challenging for several reasons. If one wants to apply a similar technique than the offline algorithm, it is not clear how to \emph{dynamically estimate} the optimal value of the online model, while considering simultaneously the behavior of the adversary and the non-smoothness of the minimum function.

\section{Extensions}\label{sec:extensions}
In this section, we consider other classes of combinatorial constraints for the offline robust model. Since the extended algorithm \texttt{ext-$\A$} considers a general algorithm $\A$ for submodular maximization, we can expand our results to other constraints such as knapsack constraints and multiple matroids. Similar results can be obtained in the online model as long as the polytope is downward-closed.

\subsection{Knapsack constraint}
Consider a knapsack constraint $\K=\{S\subseteq [n]: \ \sum_{e\in S}c_e\leq 1\}$, where $c_e>0$ for all $e\in[n]$. Our interest is to solve the following robust problem
\begin{equation}\label{eq:knapsack}
\max_{S\in\K}\min_{i\in[k]} f_i (S)
\end{equation}

\begin{corollary}\label{cor:knapsack}
For Problem \eqref{eq:knapsack}, there is a polynomial time algorithm that returns a set $S^{\alg}$, such that for all $i\in[k]$,
for a given $0<\epsilon<1$,
\[f_i(S^{\alg})\geq (1-\epsilon) \cdot \max_{S\in\K}\min_{j\in[k]} f_j (S),\]
and $\sum_{e\in S^{\alg}}c_e \leq \ell $ for $\ell = { O(\ln\frac{k}{\epsilon})}$. Moreover, $S^{\alg}$ can be covered by at most $\ell$ sets in $\K$.
\end{corollary}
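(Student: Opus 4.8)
The plan is to follow the proof of Theorem~\ref{theorem1:offline} almost verbatim, replacing the extended greedy on a matroid (Algorithm~\ref{alg:ext_greedy}) by a \emph{density greedy} tailored to the knapsack constraint, since $\K$ is not a matroid. First I would guess, via the same search over $O(\poly(n)/\epsilon)$ candidate values used in Section~\ref{sec:offline_analysis}, a scalar $\gamma$ with $(1-\epsilon/2)\OPT\le\gamma\le\OPT$, where $\OPT=\max_{S\in\K}\min_{j\in[k]}f_j(S)$ (we may assume $f_i(\emptyset)=0$ and discard any element with $c_e>1$, as it lies in no feasible set). Setting $g$ as in~\eqref{eq:average_function} with $\K$ in place of $\I$, the function $g$ is monotone submodular with $g(\emptyset)=0$, and $\max_{S\in\K}g(S)=\gamma$ since a knapsack-feasible optimizer $S^*$ has $g(S^*)=\gamma$. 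The goal is to produce $S^{\alg}$ with $g(S^{\alg})\ge\bigl(1-\tfrac{\epsilon}{2k}\bigr)\gamma$; granting this, the contradiction argument from the proof of Theorem~\ref{theorem1:offline} applies word for word: if some $f_{i^*}(S^{\alg})<(1-\tfrac\epsilon2)\gamma$ then $g(S^{\alg})\le\tfrac1k f_{i^*}(S^{\alg})+\tfrac{k-1}{k}\gamma<\bigl(1-\tfrac{\epsilon}{2k}\bigr)\gamma$, a contradiction, hence $f_i(S^{\alg})\ge(1-\tfrac\epsilon2)\gamma\ge(1-\epsilon)\OPT$ for all $i$.

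The core new ingredient is a budget-violating density greedy and its guarantee (the knapsack analogue of Theorem~\ref{thm:extended-greedy}). Starting from $A_0=\emptyset$, repeatedly append $e_j\in\argmax_{e\notin A_{j-1}} g_{A_{j-1}}(e)/c_e$ over \emph{all} remaining elements, not only those that still fit, producing an ordering $e_1,\dots,e_n$; then output $S^{\alg}=\{e_1,\dots,e_t\}$, the longest prefix with $\sum_{j\le t}c_{e_j}\le\ell$. The analysis is the standard telescoping: fixing an optimizer $S^*$ of $g$ over $\K$, monotonicity and submodularity give $\gamma-g(A_{j-1})\le\sum_{e\in S^*\setminus A_{j-1}}g_{A_{j-1}}(e)$, and since $e_j$ has maximum marginal density over all of $V\setminus A_{j-1}\supseteq S^*\setminus A_{j-1}$ while $\sum_{e\in S^*}c_e\le1$, this is at most $g_{A_{j-1}}(e_j)/c_{e_j}$. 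Hence $g(A_j)-g(A_{j-1})\ge c_{e_j}\bigl(\gamma-g(A_{j-1})\bigr)$, so $\gamma-g(A_t)\le\prod_{j\le t}(1-c_{e_j})\,\gamma\le e^{-\sum_{j\le t}c_{e_j}}\gamma$. If $t=n$ then $S^{\alg}=V$ and $g(S^{\alg})=\gamma$; otherwise $e_{t+1}$ would overflow the budget, so $\sum_{j\le t}c_{e_j}>\ell-1$ (using $c_{e_{t+1}}\le1$), giving $g(S^{\alg})\ge\bigl(1-e^{-(\ell-1)}\bigr)\gamma$. Choosing $\ell=1+\lceil\ln(2k/\epsilon)\rceil=O(\ln\tfrac k\epsilon)$ makes $e^{-(\ell-1)}\le\epsilon/(2k)$, exactly the bound required above; the running time is polynomial, dominated (as in Section~\ref{sec:offline_analysis}) by the $\gamma$-search, with each call performing $O(n^2)$ value-oracle queries.

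Finally, $\sum_{e\in S^{\alg}}c_e\le\ell$ holds by construction, and to see $S^{\alg}$ is covered by $O(\ell)$ knapsack-feasible sets I would walk through $e_1,\dots,e_t$ and partition them into consecutive bins, opening a new bin exactly when the next element would push the current bin's cost above $1$; every bin lies in $\K$, and if there are $m$ bins then for each $i<m$ the cost of bin~$i$ plus the cost of the first element of bin~$i+1$ exceeds $1$, so summing over $i$ and using $\sum_e c_e\le\ell$ yields $m-1<2\ell$, i.e.\ $m\le2\ell+1=O(\ln\tfrac k\epsilon)$; rescaling $\ell$ by the constant factor matches the statement. (Alternatively, running the density greedy in $\ell$ explicit rounds, each closing a bin of cost $\le1$, gives the covering by $\ell$ sets directly, at the cost of a slightly larger constant in $\ell$.) I expect the only delicate point to be the density-greedy guarantee: it is important that the greedy is executed as a full ordering of $V$ and then truncated, rather than restricted at each step to elements that fit, since that is what lets the maximum-density element dominate \emph{every} element of $S^*\setminus A_{j-1}$; one must also check $g(\emptyset)=0$ and that the terminal prefix has cost at least $\ell-1$.
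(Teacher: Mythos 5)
Your proof is correct, and it follows the paper's overall reduction exactly: guess $\gamma$ by the same search as in Section~\ref{sec:offline_analysis}, form the truncated average $g$ of \eqref{eq:average_function} with $\K$ in place of $\I$, drive $g(S^{\alg})$ above $\left(1-\frac{\epsilon}{2k}\right)\gamma$, and finish with the identical contradiction argument from Theorem~\ref{theorem1:offline}. Where you genuinely diverge is in the knapsack analogue of Theorem~\ref{thm:extended-greedy}. The paper's Algorithm~\ref{alg:ext_knapsack} runs $\ell$ explicit rounds of the bang-per-buck greedy, each round confined to a relaxed budget of $2$, and invokes (without proof, as a known fact) that bang-per-buck under a doubled budget attains a $1-1/e$ fraction; the induction of Theorem~\ref{thm:extended-greedy} then yields $g(S^{\alg})\ge(1-e^{-\ell})\max_{S\in\K}g(S)$, and the covering by $O(\ell)$ sets of $\K$ falls out of the round structure. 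You instead produce a single density-greedy ordering of $V$, truncate at total cost $\ell$, and prove the guarantee $(1-e^{-(\ell-1)})\gamma$ from scratch via the Wolsey-style telescoping $\gamma-g(A_j)\le(1-c_{e_j})\bigl(\gamma-g(A_{j-1})\bigr)$, recovering the coverage afterwards by a short bin-packing argument giving $m\le 2\ell+1$ bins. Both routes give the corollary with the same asymptotics; yours is self-contained where the paper leans on a folklore guarantee for the doubled-budget greedy, while the paper's round-based scheme mirrors the matroid proof more closely and makes the coverage claim immediate. The cautionary points you flag (discarding elements with $c_e>1$, running the greedy as a full ordering rather than restricting to elements that still fit, $g(\emptyset)=0$, and the truncated prefix having cost exceeding $\ell-1$) are precisely the places where the argument could otherwise break, and you handle them correctly.
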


Following the idea of the general extended algorithm \texttt{ext-$\A$}, we design an extended version of the ``bang-per-buck'' greedy algorithm. We formalize this procedure in Algorithm \ref{alg:ext_knapsack} below. Even though the standard ``bang-per-buck'' greedy algorithm does not provide any approximation factor, if we relax the knapsack constraint to be $\sum_{e\in S}c_e\leq 2$, then the algorithm gives a $1-1/e$ factor. There are other approaches to avoid this relaxation, see e.g. \citep{sviridenko_04}.
\begin{algorithm}[h!]
\small
\caption{Extended ``Bang-per-Buck'' Algorithm for Knapsack Constraints}\label{alg:ext_knapsack}
\begin{algorithmic}[1]
\Require $\ell\geq 1$, monotone submodular function $g:2^{V} \rightarrow \RR_+$, knapsack constraint $\K$.
\Ensure sets $S_1,\ldots, S_\ell\in \K$.

\For {$\tau=1, \dots, \ell$}

\State $S_\tau\gets \emptyset$

\While {$V\neq \emptyset$}

	 \State {\bf Compute} \(e^* = \argmax_{e\in V}\left\{ \frac{g(\cup_{j=1}^{\tau} S_j + e) - g(\cup_{j=1}^{\tau} S_j )}{c_e}\right\}. \)
	
	\If {$\sum_{e\in S^\tau}c_e + c_{e^*} \leq 2$}   $ S_\tau\gets S_\tau+e^*.$ \EndIf
	
	\State $V \leftarrow V - e^*$

\EndWhile
\State {\bf Restart} ground set $V$.
\EndFor
\end{algorithmic}
\end{algorithm}

Given a monotone submodular function $g:2^V\to\RR_+$, Algorithm \ref{alg:ext_knapsack} produces a set $S^{\alg}=S_1\cup\cdots\cup S_\ell$ such that
\(g(S^{\alg})\geq \left( 1 - \frac{1}{e^\ell}\right)\cdot \max_{S\in\K} g(S).\)
Therefore, Corollary \ref{cor:knapsack} can be easily proved by defining $g$ in the same way as in Theorem \ref{theorem1:offline}, and running Algorithm \ref{alg:ext_knapsack} on $g$ with \(\ell = { O(\ln\frac{k}{\epsilon})}\).

\subsection{Multiple matroid constraints}
Consider a family of $r$ matroids $\M_j=(V,\I_j)$ for $j\in[r]$. Our interest is to solve the following robust problem
\begin{equation}\label{eq:mult_matroids}
\max_{S\in\bigcap_{j=1}^r\I_j}\min_{i\in[k]} f_i (S)
\end{equation}

\begin{corollary}\label{cor:mult_matroids}
For Problem \eqref{eq:mult_matroids}, there is a polynomial time algorithm that returns a set $S^{\alg}$, such that for all $i\in[k]$,
for a given $0<\epsilon<1$,
\[f_i(S^{\alg})\geq (1-\epsilon) \cdot \max_{S\in\bigcap_{j=1}^r\I_j}\min_{i\in[k]} f_i (S),\]
where $S^{\alg}$ is the union of $O(\log \frac{k}{\epsilon}/\log \frac{r+1}{r} )$ independent sets in $\I$.
\end{corollary}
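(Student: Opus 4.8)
\textbf{Proof plan for Corollary \ref{cor:mult_matroids}.} The plan is to mimic exactly the structure of the proof of Theorem \ref{theorem1:offline}, replacing the single-matroid extended greedy (Algorithm \ref{alg:ext_greedy}) and its guarantee (Theorem \ref{thm:extended-greedy}) by an analogous ``extended greedy'' procedure tailored to the intersection of $r$ matroids. First I would recall the classical fact that the natural greedy algorithm for maximizing a monotone submodular function $g$ over $\bigcap_{j=1}^r \I_j$ returns a set $S_1$ with $g(S_1) \geq \frac{1}{r+1}\max_{S \in \bigcap_j \I_j} g(S)$ (for $r$ matroids; the case $r=1$ recovers the $\tfrac12$-bound of \citep{fisher1978analysis} used in Theorem \ref{thm:extended-greedy}). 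Equivalently, in marginal form, $g(S_1)-g(\emptyset) \geq \frac{1}{r+1}\bigl(\max_{S}g(S)-g(\emptyset)\bigr)$.

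Next I would define the extended procedure: run the $r$-matroid greedy $\ell$ times, where in round $\tau$ we greedily build $S_\tau \in \bigcap_j \I_j$ with respect to the ``residual'' function $g'(S) := g\bigl(S \cup \bigcup_{t=1}^{\tau-1} S_t\bigr)$, exactly as in Algorithm \ref{alg:ext_greedy} but with the independence test ``$S_\tau + e \in \bigcap_j \I_j$''. By the same induction as in Theorem \ref{thm:extended-greedy} — round $\tau=\ell$ is precisely the $r$-matroid greedy applied to the monotone submodular residual function $g'$ — one gets
\[
g\Bigl(\cup_{\tau=1}^{\ell} S_\tau\Bigr) - g\Bigl(\cup_{\tau=1}^{\ell-1} S_\tau\Bigr) \geq \frac{1}{r+1}\Bigl(\max_{S \in \bigcap_j \I_j} g(S) - g\Bigl(\cup_{\tau=1}^{\ell-1} S_\tau\Bigr)\Bigr),
\]
which unrolls to $g(\cup_{\tau=1}^{\ell} S_\tau) \geq \bigl(1 - \bigl(\tfrac{r}{r+1}\bigr)^{\ell}\bigr)\max_{S \in \bigcap_j \I_j} g(S)$.

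Then I would plug this into the robust reduction verbatim from the proof of Theorem \ref{theorem1:offline}: obtain an estimate $\gamma$ with $(1-\tfrac\epsilon2)\OPT \le \gamma \le \OPT$ by the same binary search, set $g(S) := \frac1k\sum_{i}\min\{f_i(S),\gamma\}$ (still monotone submodular, with $\max_{S \in \bigcap_j \I_j} g(S) = \gamma$), and run the extended procedure with $\ell$ large enough that $\bigl(\tfrac{r}{r+1}\bigr)^{\ell} \le \tfrac{\epsilon}{2k}$, i.e. $\ell = \Bigl\lceil \frac{\log(2k/\epsilon)}{\log\frac{r+1}{r}}\Bigr\rceil = O\bigl(\log\frac{k}{\epsilon}\,/\,\log\frac{r+1}{r}\bigr)$. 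Then $g(S^{\alg}) \ge (1-\tfrac{\epsilon}{2k})\gamma$, and the same averaging/contradiction argument as before shows $f_i(S^{\alg}) \ge (1-\tfrac\epsilon2)\gamma \ge (1-\epsilon)\OPT$ for every $i$; since each $S_\tau \in \bigcap_j \I_j \subseteq \I_1$ (say), $S^{\alg}$ is a union of $\ell$ independent sets, giving the stated bound.

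The only genuinely non-routine point — the ``main obstacle'' — is verifying the $\frac{1}{r+1}$ single-round guarantee for greedy over an intersection of $r$ matroids in the marginal/residual form needed for the induction, and checking that it composes cleanly when the objective is replaced by a residual function $g'$ (it does, since $g'$ inherits monotonicity and submodularity, and the feasible region $\bigcap_j\I_j$ is unchanged across rounds). This is classical (it is the Nemhauser--Wolsey--Fisher analysis for $p$-matroid intersection), so I would simply cite it; everything else is a direct transcription of the arguments already given for the single-matroid case. The running-time bookkeeping is likewise identical up to the value of $\ell$.
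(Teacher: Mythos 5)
Your proposal matches the paper's proof: it cites the classical $\frac{1}{r+1}$ greedy guarantee of \citep{fisher1978analysis} for $r$-matroid intersection, extends it by the same induction as Theorem~\ref{thm:extended-greedy} to get the $1-\left(\frac{r}{r+1}\right)^{\ell}$ bound, and then reuses the reduction from the proof of Theorem~\ref{theorem1:offline} (truncated average function $g$, binary search for $\gamma$, averaging/contradiction argument) with $\ell = O\left(\log\frac{k}{\epsilon}/\log\frac{r+1}{r}\right)$. This is exactly the argument the paper sketches, so no further comparison is needed.
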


\citet{fisher1978analysis} proved that the standard greedy algorithm gives a $1/(1+r)$ approximation for Problem \eqref{eq:mult_matroids} when $k=1$. Therefore, we can adapt Algorithm \ref{alg:ext_greedy} to produce a set $S^{\alg}=S_1\cup\cdots\cup S_\ell$ such that
\[f(S^{\alg})\geq \left( 1 - \left(\frac{r}{r+1}\right)^\ell \right) \cdot \max_{S\in\bigcap_{j=1}^r\I_j} f(S).\]
Then, Corollary \ref{cor:mult_matroids} can be proved similarly to Theorem \ref{theorem1:offline} by choosing \(\ell = O(\log \frac{k}{\epsilon}/\log \frac{r+1}{r} )\)

\subsection{Distributionally robust over polyhedral sets}
Let \(\mathcal Q \subseteq \Delta(k)\) be a polyhedral set, where \(\Delta(k)\) is the probability simplex on \(k\) elements. For \(q \in \mathcal Q\), denote \(f_q \coloneqq q_1f_1+\cdots+q_kf_k\), which is also monotone and submodular. Given a matroid $\M=(V,\I)$, our interest is to solve the following distributionally robust problem
\begin{equation}\label{eq:dist_robust}
\max_{S\in\I}\min_{q\in\Qe} f_q (S)
\end{equation}
Denote by $\text{Vert}(\Qe)$ the set of extreme points of $\Qe$, which is finite since $\Qe$ is polyhedral. Then, Problem \eqref{eq:dist_robust} is equivalent to $\max_{S\in\I}\min_{q\in\text{Vert}(\Qe)}f_q(S)$. Then, we can easily derive Corollary \ref{cor:dist_robust} (below) by applying Theorem \ref{theorem1:offline} in the equivalent problem. Note that when $\Qe$ is the simplex we get the original Theorem \ref{theorem1:offline}.

\begin{corollary}\label{cor:dist_robust}
For Problem \eqref{eq:dist_robust}, there is a polynomial time algorithm that returns a set $S^{\alg}$, such that for all $i\in[k]$,
for a given $0<\epsilon<1$,
\[f_i(S^{\alg})\geq (1-\epsilon) \cdot \max_{S\in \I} \min_{q\in\Qe}f_q(S),\]
with $S^{\alg}=S_1\cup \dots\cup S_\ell$ for $\ell = { O(\log\frac{|\text{Vert}(\Qe)|}{\epsilon})}$ and $S_1,\dots,S_\ell\in \I$.
\end{corollary}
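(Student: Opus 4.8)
The plan is to reduce the distributionally robust problem \eqref{eq:dist_robust} to an instance of the ordinary robust problem \eqref{eq:offline_def} that has already been solved by Theorem~\ref{theorem1:offline}. The first step is the reformulation: since $q \mapsto f_q(S) = \sum_{i\in[k]} q_i f_i(S)$ is linear in $q$ for every fixed $S$, the inner minimization $\min_{q \in \Qe} f_q(S)$ is a linear program over the polytope $\Qe$, and hence its optimum is attained at an extreme point. Therefore $\min_{q\in\Qe} f_q(S) = \min_{q \in \text{Vert}(\Qe)} f_q(S)$ for every $S$, and problem \eqref{eq:dist_robust} is literally the same as $\max_{S\in\I}\min_{q\in\text{Vert}(\Qe)} f_q(S)$. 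Since $\Qe$ is a polyhedral subset of $\Delta(k)$, its vertex set is finite; call its size $m := |\text{Vert}(\Qe)|$.

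Next I would check that this equivalent problem is a valid instance of the offline robust problem. For each $q \in \text{Vert}(\Qe)$ define $g_q := f_q = \sum_{i} q_i f_i$. Each $g_q$ is a non-negative combination of monotone submodular functions, hence itself monotone and submodular (and non-negative, with $g_q(\emptyset)=0$ if the $f_i$ have that property), as the excerpt already notes. So $\{g_q\}_{q \in \text{Vert}(\Qe)}$ is a collection of $m$ monotone submodular functions over the same ground set $V$ and the same matroid $\M = (V,\I)$. Applying Theorem~\ref{theorem1:offline} to this collection with parameter $\epsilon$ yields, in polynomial time, a set $S^{\alg} = S_1 \cup \cdots \cup S_\ell$ with $S_1,\dots,S_\ell \in \I$ and $\ell = O(\log\frac{m}{\epsilon})$ such that
\[
\min_{q \in \text{Vert}(\Qe)} f_q(S^{\alg}) \geq (1-\epsilon)\cdot \max_{S\in\I}\min_{q\in\text{Vert}(\Qe)} f_q(S).
\]
Combining this with the equivalence from the first step gives exactly $\min_{q\in\Qe} f_q(S^{\alg}) \geq (1-\epsilon)\max_{S\in\I}\min_{q\in\Qe} f_q(S)$, which is the claimed guarantee. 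One small wrinkle: the corollary as stated asserts a per-function bound $f_i(S^{\alg}) \geq (1-\epsilon)\max_{S\in\I}\min_{q\in\Qe} f_q(S)$ for all $i\in[k]$; for this to be implied one wants the standard basis vectors $e_i \in \Qe$ (or at least that $f_i = f_{e_i}$ appears among the relevant functions). When $\Qe = \Delta(k)$ this holds and recovers Theorem~\ref{theorem1:offline}; for general polyhedral $\Qe$ the natural statement is the one on $\min_{q\in\Qe} f_q(S^{\alg})$, so I would either state the corollary that way or include $\{e_i\}$ among the vertices being fed to the algorithm.

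The main obstacle here is essentially bookkeeping rather than mathematics: one must ensure the reduction is genuinely polynomial time, which requires that $\text{Vert}(\Qe)$ can be enumerated (or at least that $m$ is polynomially bounded and the vertices are accessible), since the running time and the bound $\ell = O(\log\frac{m}{\epsilon})$ both depend on $m$. If $\Qe$ is given by its vertices this is immediate; if $\Qe$ is given by inequalities, enumerating vertices can blow up, and one would instead want to observe that the greedy step of Algorithm~\ref{alg:ext_greedy} only needs to evaluate the function $g(S) = \frac{1}{m}\sum_q \min\{f_q(S),\gamma\}$ (or rather its analogue), which still requires touching all vertices — so in that regime the cleanest fix is to run the continuous/LP-based machinery where the inner min over $\Qe$ is handled by linear programming directly. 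For the corollary as stated, assuming $\text{Vert}(\Qe)$ is available, the proof is just the two-line reduction above.
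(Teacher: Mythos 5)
Your reduction---replacing the inner minimization over $\Qe$ by one over its finitely many extreme points (valid since $f_q(S)$ is linear in $q$ for fixed $S$) and then invoking Theorem~\ref{theorem1:offline} on the resulting family of $|\text{Vert}(\Qe)|$ monotone submodular functions---is exactly the argument the paper gives for Corollary~\ref{cor:dist_robust}. The caveats you raise (the per-$f_i$ phrasing of the guarantee and the need for $\text{Vert}(\Qe)$ to be enumerable in polynomial time) concern the paper's own statement rather than your proof, which follows the same route.
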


\subsection*{Acknowledgements}
This research was partially supported by NSF Award CCF-1717947, NSF
CAREER Award CMMI-1452463, an IBM Ph.D. fellowship, and a Microsoft
Research Ph.D. fellowship. We also thank Shabbir Ahmed for the discussions about the distributionally robust problem \eqref{eq:dist_robust}.

\bibliography{bibliography}

\appendix

 \section{Remaining Proofs}
\label{sec:remaining_proofs}
\begin{proof}[Proof of Corollary \ref{cor:threshold_greedy}.]
Denote by $r$ the rank of matroid $\M$. Let $S^*=\{e_1^*,\ldots, e_r^*\}$ and  $S=\{e_1,\ldots,e_r\}$ be the optimal set and the set obtained with the algorithm, respectively. W.l.o.g, we can assume that $S^*$ and $S_G$ are both basis in $\M$, so there exists a bijection $\phi$ such that $\phi(e_i) = e_i^*$ for all $i\in[r]$. Denote by $S_{i-1}=\{e_1,\ldots,e_{i-1}\}$ the set of elements after iteration $i-1$. Observe that if  $e_i$ is the next element chosen by the algorithm and the current threshold value is $w$, then we get the inequalities
\[g_{S_{i-1}}(x) = \left\{\begin{matrix} \geq  w & \text{if} \ x=e_i \\ \leq w/(1-\delta) & \text{if} \ x\in V \ \text{s.t.}  \ S_{i-1}+x \in \I\end{matrix}\right.\]
This imples that $g_{S_{i-1}}(e_i)\geq (1-\delta)f_{S_{i-1}}(x)$ for all $x\in V\backslash S_{i-1}$ such that $S_{i-1} + x\in \I$. In particular for $x = e_i^*$ we have then
\[(1-\delta)g_{S_{i-1}}(e_i^*)\leq g(S_i) - g(S_{i-1}).\]
 On the other hand, if we apply submodularity twice we get
\[g(S^*)-g(S)\leq \sum_{i=1}^Kg_{S}(e_i^*)\leq \sum_{i=1}^Kg_{S_{i-1}}(e_i^*)\]
Using the previous two inequalities we get
\[(1-\delta)[g(S^*)-g(S)] \leq  \sum_{i=1}^K  g(S_i) - g(S_{i-1}) = g(S).\]
So we finally obtain
\[g(S)\geq \left(1- \frac{1}{2-\delta} \right) \cdot g(S^*). \]
\end{proof}

\vspace{1em}

\begin{proof}[Proof of Lemma \ref{lem:prop_softmin}.]
We will just prove properties 1 and 4, since the rest is a straightforward calculation.
\begin{enumerate}
\item[1.] First, for all $i\in[k]$ we have $e^{-\alpha \phi_i(x)}\leq  e^{-\alpha \phi_{min}(x)}$. Thus,
\begin{equation*}H(x)=-\frac{1}{\alpha}\ln\sum_{i\in[k]}e^{-\alpha \phi_i(x)}\geq -\frac{1}{\alpha}\ln \left(ke^{-\alpha \phi_{min}(x)}\right)= \phi_{min}(x)-\frac{\ln k}{\alpha}\end{equation*}
On the other hand, $\sum_{i\in[k]}e^{-\alpha \phi_i(x)}\geq e^{-\alpha \phi_{min}(x)}$. Hence,
\[H(x)\leq -\frac{1}{\alpha}\ln\left(e^{-\alpha \phi_{min}(x)}\right)=\phi_{min}(x).\]

\item[4.] Let us consider sets $A_1=\{i\in[k]: \ \phi_i(x)\leq \phi_{min}(x)+(\ln \alpha)/\alpha\}$ and $A_2=\{i\in[k]: \ \phi_i(x)> \phi_{min}(x)+(\ln \alpha)/\alpha\}$. Our intuitive argument is the following: when $\alpha$ is sufficiently large, those $p_i(x)$'s with $i\in A_2$ are exponentially small, and $p_i(x)$'s with $i\in A_1$ go to a uniform distribution over elements in $A_1$. First, observe that for each $i\in A_2$ we have
 \begin{equation*}p_i(x)=\frac{e^{-\alpha \phi_i(x)}}{\sum_{i\in[k]}e^{-\alpha \phi_i(x)}}<\frac{e^{-\alpha[\phi_{min}(x)+(\ln \alpha)/\alpha]}}{e^{-\alpha \phi_{min}(x)}}=\frac{1}{\alpha},\end{equation*}
 so $\sum_{i\in A_2}p_i(x)\phi_i(x)\leq \frac{k}{\alpha}.$
 On the other hand, for any $i\in A_1$ we have
 \begin{equation*}\sum_{i\in A_1}p_i(x)\phi_i(x)\leq \left(\phi_{min}(x)+\frac{\ln \alpha}{\alpha}\right)\sum_{i\in A_1}p_i(x) \leq H(x)+\frac{\ln \alpha}{\alpha}+\frac{\ln k}{\alpha}\end{equation*}
 where in the last inequality we used the approximation property of the soft-min function. Therefore,
 \[\sum_{i\in [k]}p_i(x)\phi_i(x)\leq H(x)+\frac{\ln \alpha}{\alpha}+\frac{\ln k}{\alpha}+ \frac{k}{\alpha}.\]
 Finally, the other inequality is clear since $\sum_{i\in [k]}p_i(x)\phi_i(x)\geq \phi_{min}(x)\geq H(x)$. 
\end{enumerate}
\end{proof}

\vspace{1em}
\begin{proof}[Proof of Lemma \ref{lemma:Taylor_bound_H}.]
For every $t\in[T]$ define a matroid $\M_t=(V\times\{t\},\I\times\{t\})=(V_t,\I_t)$. Given this, the union matroid is given by a ground set $V^{[T]}=\bigcup_{t=1}^TV_t$, and independent set family $\I^{[T]}=\{S\subseteq V^{1:T}: \ S\cap V_t\in \I_t\}$. Define $\HH(X):=\sum_{t\in[T]} H^t(x^t)$ for any matrix $X\in \P(\M)^T$, where $x^t$ denotes the $t$-th column of $X$. Clearly, $\nabla_{(e,t)} \HH(X)=\nabla_e H^t(x^t)$. Moreover, the Hessian corresponds to
\[\nabla^2_{(e_1,t),(e_2,s)} \HH(X)=\left\{\begin{matrix}0 & \ \text{if} \ t\neq s\\ \nabla^2_{e_1,e_2}\HH^t(x^t) & \ \text{if} \ t=s\end{matrix}\right.\]
Consider any $X,Y\in \P(\M)^T$ with $|y^t_e-x^t_e|\leq\delta$. Therefore, a Taylor's expansion of $\HH$ gives
\begin{equation*}\HH(Y)=\HH(X)+\nabla \HH(X)(Y-X)+\frac{1}{2}(Y-X)^\top \nabla^2\HH(\xi) (Y-X)\end{equation*}
where $\xi$ is on the line between $X$ and $Y$. If we expand the previous expression we obtain
\begin{equation*}\HH(Y)-\HH(X)=\sum_{t\in[T]}\left\langle\nabla H^t(x^t),y^t-x^t\right\rangle+\frac{1}{2}\sum_{e_1,e_2\in V}\sum_{t\in[T]}(y^t_{e_1}-x^t_{e_1})\nabla^2_{e_1,e_2}H^t(\xi)(y^t_{e_2}-x^t_{e_2})\end{equation*}
Finally, by using property 3 in Lemma \ref{lem:prop_softmin} and by bounding the Hessian (and ussing the fact that $\phi_i^t(x)\in[0,1]$) we get
\begin{equation*} \HH(Y)-\HH(X)\geq\sum_{t\in[T]}\left\langle\nabla H^t(x^t),y^t-x^t\right\rangle-O(Tn^2\delta^2\alpha),\end{equation*}
which is equivalent to
\begin{equation*}\sum_{t\in[T]}H^t(y^t)-\sum_{t\in[T]}H^t(x^t)\geq\sum_{t\in[T]}\left\langle\nabla H^t(x^t),y^t-x^t\right\rangle-O(Tn^2\delta^2\alpha). \end{equation*}
\end{proof}

\section{Follow-the-Perturbed-Leader Algorithm}\label{sec:FPL}
In this section, we briefly recall the well-known
Follow-the-Perturbed-Leader (FPL) algorithm introduced in
\citep{kalai_etal05} and used in many online optimization problems (see
e.g.,~\citep{rakhlin2009lecture}). The classical online learning
framework is as follows: Consider a dynamic process over $T$ time
steps. In each stage $t\in[T]$, a decision-maker has to choose a
point $d_t \in \D$ from a fixed (possibly infinite) set of actions
$\D\subseteq\RR^n$, then an adversary chooses a vector $s_t$ from a
set $\mathcal{S}$. Finally, the player observes vector $s_t$ and
receives reward \(\langle s_t, d_t\rangle\), and the process continues. The goal of
the player is to maximize the total reward
$\sum_{t\in[T]} \langle s_t, d_t\rangle$, and we compare her performance with
respect to the best single action picked in hindsight, i.e.,
$\max_{d\in\D}\sum_{t\in[T]} \langle s_t\cdot d\rangle$.  This performance with respect
to the best single action in hindsight is called (expected)
\emph{regret}, formally:
\[\text{\bf Regret}(T)=\max_{d\in\D}\sum_{t\in[T]} \langle s_t, d\rangle - \EE\left[\sum_{t\in[T]}  \langle s_t, d_t\rangle\right].\]
\citet{kalai_etal05} showed that even if one has only
access to a linear programming oracle for \(\D\), i.e., we can efficiently solve
$\max_{d\in\D}\langle s, d\rangle$ for any $s\in\mathcal{S}$, then the FPL
algorithm~\ref{alg:FPL} achieves sub-linear regret, specifically
$O(\sqrt{T})$.

In order to state the main result in \citep{kalai_etal05}, we need the
following. We assume that the decision set \(\D\) has diameter at most
\(D\), i.e., for all $d,d'\in\D$, $\| d-d'\|_1\leq D$. Further, for
all $d\in\D$ and $s\in\mathcal{S}$ we assume that the absolute reward is
bounded by \(L\), i.e., $\left| \langle d, s\rangle\right|\leq L$ and that the
\(\ell_1\)-norm of the reward vectors is bounded by \(A\), i.e., for all
$s\in\mathcal{S}$, $\|s\|_1\leq A$.
\begin{theorem}[\citet{kalai_etal05}]\label{theo:regret_FPL}
  Let $s_1,\ldots, s_T\in\mathcal{S}$ be a sequence of rewards. Running the
  FPL algorithm \ref{alg:FPL} with parameter $\eta\leq 1$ ensures
  regret
\[\text{\bf Regret}(T)\leq \eta LAT+\frac{D}{\eta}.\]
Moreover, if we choose $\eta=\sqrt{D/LAT}$, then $\text{\bf
  Regret}(T)\leq 2\sqrt{DLAT} = O(\sqrt{T})$.
\end{theorem}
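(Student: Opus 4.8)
The plan is to reproduce the classical analysis of \citet{kalai_etal05}, which decomposes the regret of FPL into a deterministic ``be-the-leader'' term plus a stability term that the random perturbation keeps small. First I would prove the \emph{be-the-leader} lemma: for any fixed sequence of reward vectors $r_0,\dots,r_T$, the hindsight leader $L_t:=\argmax_{d\in\D}\sum_{\tau\le t}r_\tau\cdot d$ satisfies $\sum_{t=0}^{T}r_t\cdot L_t\ge\max_{d\in\D}\sum_{t=0}^{T}r_t\cdot d$; this is a one-line induction on $T$ (peel off the last reward, use that $L_T$ maximizes the full sum, and telescope). Applying it to the augmented sequence $r_0:=q$, $r_t:=s_t$ for $t\ge1$, where $q\sim[0,1/\eta]^n$ is the single perturbation sampled by Algorithm~\ref{alg:FPL}, and writing $L_t(q)=\argmax_{d\in\D}\big(q+\sum_{\tau=1}^{t}s_\tau\big)\cdot d$, I obtain for every realization of $q$ the bound
\[ \sum_{t=1}^{T}s_t\cdot L_t(q)\ \ge\ \max_{d\in\D}\sum_{t=1}^{T}s_t\cdot d\ +\ q\cdot\big(d^\star-L_0(q)\big)\ \ge\ \max_{d\in\D}\sum_{t=1}^{T}s_t\cdot d\ -\ \frac{D}{\eta}, \]
where $d^\star$ is the benchmark action and the last step uses $q\in[0,1/\eta]^n$ together with the diameter hypothesis $\|d^\star-L_0(q)\|_1\le D$.

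Next I would observe that FPL plays precisely the \emph{lagged} perturbed leader, $d_t=\argmax_{d\in\D}\big(q+\sum_{\tau=1}^{t-1}s_\tau\big)\cdot d=L_{t-1}(q)$, so, taking expectations over $q$ and combining with the previous step, the regret is at most $D/\eta+\sum_{t=1}^{T}\EE_q\big[s_t\cdot(L_t(q)-L_{t-1}(q))\big]$ and the whole problem reduces to bounding the per-round ``switching cost'' $\EE_q\big[s_t\cdot(L_t(q)-L_{t-1}(q))\big]$. The crucial identity is $L_t(q)=L_{t-1}(q+s_t)$: it lets me rewrite $\EE_q[s_t\cdot L_t(q)]$ as the expectation of the \emph{same} bounded function $q\mapsto s_t\cdot L_{t-1}(q)$ but under the translated law $\mathrm{Unif}([0,1/\eta]^n+s_t)$ instead of $\mathrm{Unif}([0,1/\eta]^n)$. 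Since $|s_t\cdot d|\le L$ for all $d\in\D$ and the total variation distance between the cube and its translate by $s_t$ is $1-\prod_i(1-\eta|s_{t,i}|)\le\eta\|s_t\|_1\le\eta A$ (the two cubes overlap in a fraction $\prod_i(1-\eta|s_{t,i}|)$ of their volume), each switching cost is at most $\eta LA$ up to an absolute constant, so summing over $t$ yields $\text{\bf Regret}(T)\le\eta LAT+D/\eta$. Finally, setting $\eta=\sqrt{D/LAT}$ balances the two terms to give $\text{\bf Regret}(T)\le2\sqrt{DLAT}=O(\sqrt T)$.

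I expect the only delicate point to be this last stability estimate: one must argue that a uniform perturbation over a cube is ``smooth enough'' that appending one more bounded reward vector rarely moves the $\argmax$, and carry out the cube-overlap volume computation carefully (including the corner cases where some $|s_{t,i}|$ is comparable to $1/\eta$, and tightening the crude $2L\cdot\eta A$ bound to the claimed $\eta LA$ via a one-sided argument). A second, minor subtlety is that Algorithm~\ref{alg:FPL} uses a \emph{single} perturbation $q$ in every round rather than a fresh one each round; this is harmless here because the comparison above only ever contrasts the marginal law of $q$ at two consecutive rounds, but it is worth noting that handling an adaptive adversary would instead require resampling $q_t$ each round, exactly as in the remark following Theorem~\ref{theorem:online}. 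Everything else---the be-the-leader induction and the reduction to switching costs---is routine bookkeeping.
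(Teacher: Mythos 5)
The paper does not prove this statement at all: Theorem~\ref{theo:regret_FPL} is quoted verbatim from \citet{kalai_etal05} as a black-box ingredient, so there is no internal proof to compare against. Your reconstruction is the standard Kalai--Vempala argument (be-the-leader induction with the perturbation $q$ treated as a round-zero reward, reduction of the regret to per-round switching costs via $L_t(q)=L_{t-1}(q+s_t)$, and the cube-translation overlap estimate giving total variation at most $\eta\|s_t\|_1\le\eta A$), and it is sound; the single shared perturbation is indeed harmless against the oblivious adversary, exactly as you note. The only loose end is the one you flag yourself: the generic bound $|\EE_P h-\EE_Q h|\le 2\|h\|_\infty\,d_{TV}(P,Q)$ yields $2\eta LAT+D/\eta$, and recovering the stated constant $\eta LAT$ requires the one-sided/overlap-event refinement from the original paper --- a constant-factor matter that does not affect the $O(\sqrt{DLAT})$ conclusion used elsewhere in the paper.
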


 \begin{algorithm}[h!]
\caption{Follow-the-Perturbed-Leader (FPL), \citep{kalai_etal05}}\label{alg:FPL}
\begin{algorithmic}[1]
\small
\Require Parameter \(\eta > 0\)
\Ensure Sequence of decisions \(d_1, \dots, d_T\)
\State Sample $q\sim[0,1/\eta]^n$.
\For {$t=1$ to $T$}
\State \textbf{Play} $d_t=\argmax_{d\in\D}\langle \sum_{j=1}^{t-1}s_j + q, d\rangle$.
\EndFor
\end{algorithmic}
\end{algorithm}

\section{Extended Stochastic Greedy for Partition Matroid}\label{sec:E-StochG}
Consider a partition \(\{P_1,\ldots,P_q\}\) on ground set $V$ with $n_j := |P_j|$ for all $j\in[q]$ and a family of feasible sets $\I = \{S\subseteq V: \ |S\cap P_j|\leq k_j \ \forall j\in[q]\}$ which for a matroid $\M=(V,\I)$. We can construct a heuristic based on the stochastic greedy algorithm \citep{mirzasoleiman_etal15} and adapted to partition constraints (Algorithm \ref{alg:ext_partition_stochastic_greedy}): given $\epsilon'>0$, in each round it uniformly samples $\frac{n_j}{b}\log \frac{1}{\epsilon'}$ elements from each part $R_j\sim P_j$, where $n_j:=|P_j|$. And then, it obtains the element with the largest marginal value among elements in $\cup_{j\in[q]}R_j$. 

Even though, we are not able to state any provable guarantee, we use Algorithm \ref{alg:ext_partition_stochastic_greedy} as inner loop for solving the robust problem \eqref{eq:offline_def} with $\ell = \lceil \log \frac{2k}{\epsilon}\rceil$.

\begin{algorithm}[htbp]
\small
\caption{ Extended Stochastic-Greedy for Partition Matroid}\label{alg:ext_partition_stochastic_greedy}
\begin{algorithmic}[1]
\renewcommand{\algorithmicrequire}{\textbf{Input:}}
\renewcommand{\algorithmicensure}{\textbf{Output:}}
\Require $\ell\geq 1$, monotone submodular function $g:2^{V} \rightarrow \RR_+$, partition matroid $\M=(V,\I)$, $\epsilon'>0$.
\Ensure sets $S_1,\ldots, S_\ell\in \I$.
\For {$\tau=1, \dots, \ell$}
\State $S_\tau\leftarrow\emptyset$
\While {$S_\tau$ is not basis in $\M$}
\State For each $j\in[q]$, uniformly sample $R_j\sim P_j\backslash S_\tau$ with $\frac{n_j}{k_j}\log \frac{1}{\epsilon'}$ elements.
\State $e^*\leftarrow \argmax_{e\in R_1\cup\cdots R_r} \left\{g_{\cup_{j=1}^{\tau} S_j}( e)\right\}$.
\State $S_\tau\leftarrow S_\tau + e^*$.
\EndWhile
\EndFor
\end{algorithmic}
\end{algorithm}

\newpage
\section{Pseudo-Code for the Offline Bi-criteria Algorithm}\label{sec:final_pseudo}
In this section, we present the pseudo-code of the main algorithm that we use for the experiments in Section \ref{sec:offline_experiments}. Algorithm \ref{alg:final_robust} works as follows: in an outer loop we obtain an estimate $\gamma$ on the value of the optimal solution $\OPT$ via a binary search. For each guess $\gamma$ we define a set function \(g^\gamma(S):= \frac{1}{k} \sum_{i\in[k]} \min\{f_i(S), \gamma\}\). Then, we run an extended algorithm \texttt{ext-$\mathcal{A}$} (either the standard greedy, threshold greedy, or the adapted version of the stochastic greedy) on $g^\gamma$. If at some point the solution $S$ satisfies $\min_{i\in[k]}f_i(S)\geq (1-\epsilon/2)\gamma$, we stop and update the lower $\lb = \min_{i\in[k]}f_i(S)$, since we find a good candidate. Otherwise, we continue. After finishing iteration $\tau$, we check if we realize the guarantee $g^\gamma(\cup_{j=1}^\tau S_j) \geq (1-\beta^\tau)\cdot\gamma$. If not, then we stop and update the upper bound $\ub = \gamma$, otherwise we continue. Finally, we stop the binary search whenever $\lb$ and $\ub$ are sufficiently close. 

\begin{algorithm}[htbp]
\caption{Pseudo-code to get bi-criteria solutions}\label{alg:final_robust}
\begin{algorithmic}[1]
\small
\renewcommand{\algorithmicrequire}{\textbf{Input:}}
\renewcommand{\algorithmicensure}{\textbf{Output:}}
\Require $\epsilon>0$, monotone submodular functions $\{f_i\}_{i\in[k]}$, partition matroid $P_1,\ldots,P_q$, and subroutine $\mathcal{A}$.
\Ensure sets $S_1,\ldots, S_\ell\in \I$.
\State Compute $\lb$ and $\ub$ as stated above.
\While {$\frac{\ub-\lb}{\ub} > 2\epsilon$}
\State $\gamma = (\ub+\lb)/2$
\For {$\tau = 1,\ldots, \ell$}
\State $S_\tau =\emptyset$.
\State Compute marginals $\rho(e) = g^\gamma(S+e) - g^{\gamma}(S)$ for all $e\in V$.
\If {$\max_e\rho(e) \leq 0$}

\If {$\min_if_i(\cup_{j=1}^\tau S_j)\geq (1-\epsilon)\gamma$}
\State {\bf Update} $\lb = \min_if_i(\cup_{j=1}^\tau S_j)$
\Else
\State {\bf Update} $\ub = \gamma$
\EndIf
\State {\bf Break}

\Else
\State {\bf Obtain} $S_\tau \leftarrow \mathcal{A}(g^\gamma,\cup_{j=1}^{\tau-1}S_j)$
\State 
\If {$g^\gamma(\cup_{j=1}^\tau S_j) < (1-\beta^\tau)\cdot\gamma$}
\State {\bf Update} $\ub = \gamma$.
\State {\bf Break}
\Else
\If {$\min_if_i(\cup_{j=1}^\tau S_j)\geq (1-\epsilon)\gamma$}
\State {\bf Update} $\lb = \min_if_i(\cup_{j=1}^\tau S_j)$
\State {\bf Break}
\Else
\State {\bf Continue}
\EndIf
\EndIf
\EndIf

\EndFor
\EndWhile
\end{algorithmic}
\end{algorithm}

\end{document}